\documentclass[12pt]{article}
\usepackage[utf8]{inputenc}
\usepackage[T1]{fontenc}
\usepackage{amsmath,amssymb,amsthm,mathtools}
\usepackage{enumerate}
\usepackage[margin=1.25in]{geometry}
\usepackage[colorlinks=true, linkcolor=black, citecolor=black, urlcolor=black]{hyperref}
\usepackage{natbib}
\usepackage{booktabs}
\usepackage{tocloft}

\theoremstyle{plain}
\newtheorem{theorem}{Theorem}
\newtheorem{lemma}{Lemma}
\newtheorem{claim}{Claim}
\newtheorem{proposition}{Proposition}
\theoremstyle{definition}
\newtheorem{definition}{Definition}
\newtheorem{example}{Example}
\theoremstyle{remark}
\newtheorem{remark}{Remark}

\theoremstyle{plain}
\newtheorem{atheorem}{Theorem}
\newtheorem{alemma}[atheorem]{Lemma}

\theoremstyle{definition}
\newtheorem{adefinition}{Definition}

\theoremstyle{remark}
\newtheorem{aremark}{Remark}

\newcommand{\Thst}{\Theta^{\ast}}
\newcommand{\SU}{\mathit{SU}}
\newcommand{\CB}{C^{B}}   
\newcommand{\CN}{C^{N}}  
\newcommand{\PO}{\mathit{PO}}

\begin{document}

\title{The Paradox of Strategic Altruism}

\author{Foivos Savva\thanks{School of Economic, Social and Political Sciences,
University of Southampton, Southampton, UK.
E-mail: \texttt{F.Savva@soton.ac.uk}.}
\and
Michele Lombardi\thanks{University of Liverpool Management School,
University of Liverpool, Liverpool, UK; and Department of Economics
and Statistics, University of Napoli Federico~II, Naples, Italy.
E-mail: \texttt{michele.lombardi@liverpool.ac.uk}.}
\and Ritesh Jain\thanks{University of Liverpool Management School,
University of Liverpool, Liverpool, UK.
E-mail: \texttt{ritesh.jain@liverpool.ac.uk}.}}

\date{\today}

\maketitle

\thispagestyle{empty}
\clearpage

\begin{abstract}
\noindent
Can altruism serve as the behavioral foundation for efficient
institutional design? We study full implementation in Berge
equilibrium, the solution concept that formalizes
\emph{strategic altruism}: each agent's opponents collectively
maximize her payoff. While Berge
equilibrium resolves the Prisoner's Dilemma and can eliminate
cooperation failures in social dilemmas, we show that this optimism does
not survive the move from individual behavior to institutional
design. First, we show that the weak Pareto optimal rule is not implementable in Berge equilibrium. Second, on the unrestricted domain of strict preferences, any
social choice rule that is both weakly Pareto efficient and
implementable in Berge equilibrium must be dictatorial.
Finally, we show that Berge implementability is \emph{strictly more
demanding} than Nash implementability: every social goal
achievable under strategic altruism is also achievable under
self-interest, but not vice versa. The very feature that makes Berge equilibrium appealing
at the behavioral level, that is, opponents protecting each agent's
payoff, is precisely what forecloses efficient, non-dictatorial
institutions at the design level.
\end{abstract}
 
\clearpage

\thispagestyle{empty}
\tableofcontents

\newpage

\section{Introduction}
\label{sec:intro}
 
Individual rationality and collective welfare are frequently in
conflict. A farmer who draws from a shared aquifer maximizes her
own yield, and so does each of her neighbors, with the result
that the aquifer is depleted and everyone is left worse off than
if they had all drawn less. A country that cuts carbon emissions
bears the cost alone while sharing the benefit with the whole
world, and so every country waits for the others to act first. In each case,
behavior that is individually sensible produces outcomes that are
collectively disastrous. Game theory distills this tension into
the Prisoner's Dilemma: two players, each choosing whether to
cooperate or defect, each find that defection is the dominant
strategy. Yet, mutual defection leaves both strictly worse off
than mutual cooperation. The Prisoner's Dilemma is not merely
an abstract example; it underlies a vast range of real collective
problems, from arms races and climate negotiations to cartel
instability, public goods provision and union membership decisions \citep{olson1971logic}.
 
One response to this predicament is institutional: design rules,
procedures, and mechanisms that give individuals the incentive to
behave cooperatively. A second response is behavioral: change the
preferences or norms that govern individual action. The first
response is the domain of mechanism design. The second is the
domain of social psychology, behavioral economics, and moral
philosophy, and it is the starting point of this paper.

The most natural behavioral escape from the social dilemma is
altruism. If each agent genuinely cared about the welfare of
others, the dilemma would dissolve.
An agent who maximizes the welfare of her opponents has no
incentive to defect since defection harms the others, which is
precisely what an altruistic agent wishes to avoid. The unique
equilibrium of the Prisoner's Dilemma under this behavioral norm is mutual cooperation,
which is also Pareto-optimal. Altruism solves the
social dilemma not by changing the rules of the game, but by
changing how players play it.
 
This logic has been given a precise game-theoretic formulation
by \citet{berge1957theorie}, and developed formally by
\citet{zhukovskiy1985} and \citet{COLMAN2011166}. A
\emph{Berge equilibrium} is a strategy profile in which each
player's \emph{opponents} collectively maximize \emph{that
player's} payoff, holding her strategy fixed. Where Nash
equilibrium asks what each player does for herself, Berge
equilibrium asks what each player's opponents do for her. We call this type of behavior \textit{strategic altruism}, where the altruistic motive emerges at the strategic level, as an equilibrium notion, and not at the preference level.\footnote{There are many well-known models of altruism or other-regarding preferences, such as \cite{fehr1999theory} or \cite{bolton2000erc}, among others.} The behavioral norm it encodes---do the best you can for
others, given what they do---can be interpreted as a strategic
formalization of the Golden Rule that appears across cultures
and moral traditions \citep{haller2024berge}. In the Prisoner's
Dilemma, it delivers mutual cooperation. In coordination games
such as the Stag Hunt, it resolves the payoff-dominance problem.
A growing literature has shown that this cooperative equilibrium
concept is both theoretically well-founded and empirically
plausible as a description of altruistic behavior
\citep{COLMAN2011166, sugden2011mutual, UNVEREN20231,
courtois2015play, crettez2017berge_sugden}.

Behavioral change may nevertheless not be enough. Societies do not rely solely on individuals being altruistic. They design
\emph{institutions} (constitutions, voting rules, regulatory
mechanisms, allocation procedures and so on) to achieve desirable social goals, by taking the behavioral norms of individuals into account. Now, while Berge altruism can support attractive social outcomes in certain types of strategic interactions such as social dilemmas, this does not necessarily imply that it is good for the collective in any strategic situation. In other words, while Berge equilibrium as a standard of behavior seems to perform well \textit{locally}, its implications for \textit{global} institution design are unclear. This is especially crucial since, in general, institutions are expected to implement the desired social goal, for any configuration of individual preferences, that is, states of the world. This type of exercise is the
domain of \emph{implementation theory}, where a hypothetical social planner wishes to implement a social goal, without knowing the exact individual preferences. Its central
conclusion is that what a society can achieve institutionally
depends critically on how agents behave within the game
\citep{maskin1999nash, palfrey2002implementation}.
 
This observation motivates a question that, to our knowledge,
has not been asked before. If agents are altruistic in the Berge
sense, that is, if each player's opponents maximize her
payoff, can a mechanism designer exploit this behavioral
disposition to implement efficient and fair social goals? The present paper shows,
definitively and perhaps surprisingly, that the answer is no.

We study full implementation in Berge equilibrium, the
requirement that, in every possible configuration of individual preferences, the
set of Berge equilibrium outcomes of a mechanism coincides
\emph{exactly} with the set of socially optimal outcomes
prescribed by the planner's goal. Our main result is a
dictatorship theorem:
 
\medskip
\noindent\textbf{Main Result} (informal). \textit{On the
unrestricted domain of strict preferences, any social choice
rule that is both weakly Pareto efficient and implementable in
Berge equilibrium must be dictatorial: it always selects the
most-preferred outcome of a single fixed agent, regardless of
everyone else's preferences.}
\medskip
 
Weak Pareto efficiency is the mildest efficiency requirement
one can impose: it merely rules out outcomes that every agent
unanimously prefers to abandon. The unrestricted domain
imposes no restrictions on preferences: agents may rank the
alternatives in any order. The theorem therefore says that
altruistic agents, whatever their preferences, cannot be
governed in an efficient and non-dictatorial manner. The
only efficient rules that a society of strategic altruists can implement
are those that simply impose one agent's will on everyone else. This is the paradox: strategic altruism at the individual level, far from
enabling better collective institutions, forecloses them.
Agents who unconditionally serve others' interests cannot be
organized into an efficient and democratic institution. 
 
We establish two further results that complete the picture.
First, not only the full Pareto correspondence but also natural
efficient non-dictatorial rules, including a simple ``king-maker''
rule in which one agent designates another as decision-maker, are
non-implementable in Berge equilibrium, even though they are
implementable in Nash equilibrium \citep{maskin1999nash}. Second,
we show that Berge implementability is \emph{strictly more
demanding} than Nash implementability: every social choice rule
implementable under Berge equilibrium is also implementable under Nash
equilibrium, but not vice versa. This is the \emph{altruism tax}:
switching from self-interested to altruistic strategic reasoning
strictly narrows the menu of social goals available to the
institution designer. A society of sinners gives the planner more
leverage than a society of saints.

As a by-product of this last result, we provide a complete characterization of the social
choice rules that are implementable in Berge equilibrium, essentially the strategic-altruism
counterpart of the \citet{moore1990nash} characterization of Nash implementation. The
characterization isolates the precise structural property that separates the two notions: a
cross-profile requirement which, under self-interest, binds only when there are two agents,
but which under strategic altruism binds for every group size. This is essentially the formal source of
the altruism tax.


The result is surprising on two counts. First, one might have
expected altruistic agents to be easier to coordinate: they are
not trying to game the mechanism for personal gain, they are
genuinely trying to help each other. The impossibility shows
that the very feature that makes Berge equilibrium appealing at
the behavioral level, that is, each agent's opponents working to
maximize her payoff, is precisely what makes it destructive at
the institutional level. In Berge equilibrium, each agent is
``protected'' by her opponents: they cannot, in equilibrium,
push the outcome below her current payoff. When the planner
tries to design a mechanism that selects good outcomes in many
different states of the world, these protections generate
contradictory obligations across scenarios. The mechanism cannot
simultaneously honor every agent's protection in every state,
and efficiency then forces all these obligations to collapse into
the preferences of a single agent, the dictator.
 
Second, the result is stronger than its closest classical
antecedent. \citet{hurwicz1978construction} showed that
on the full domain of strict preferences, any Pareto-efficient
and Nash-implementable rule must be dictatorial when there are
only two agents. For three or more agents, this Nash
impossibility \emph{disappears}: efficient, non-dictatorial
rules, like Maskin's king-maker rule, become
Nash-implementable \citep{maskin1999nash}. Our theorem shows
that under Berge equilibrium the impossibility persists for any
number of agents, however large. The behavioral assumption is
what drives the result, not the group size.
 
The conceptual moral is equally sharp. The social dilemma
literature and the mechanism design literature address the
same problem from different levels. Berge equilibrium resolves
the social dilemma at the \emph{behavioral} level, by changing
how agents play a given game. However, while it solves the social dilemma locally, it creates insurmountable issues at the institutional level, globally, as it does not help us design institutions that are both efficient and non-dictatorial. The
impossibility is not a failure of altruism as an ethical value;
it is a structural property of the mechanism design exercise. On this, one is reminded of
Adam Smith's observation that individual self-interest can serve
the social good more reliably than deliberate benevolence:
 
\begin{quote}
By pursuing his own interest he frequently promotes that of the
society more effectually than when he really intends to promote it.
\citep{smith1776wealth}
\end{quote}
 
Our results offer an institutional analog of this exact insight. Naturally, though, Smith's seminal work focused on the relatively new institution of free markets, which had been expanding in Great Britain during the 18\textsuperscript{th} and 19\textsuperscript{th} century. Thus, his observation that selfish behavior can more reliably promote the public good points directly to the existing institutional setting. By utilizing the implementation theory framework, our results carry an even stronger message: \textit{even under optimal institutional arrangements for each behavioral norm, individualistic egoism is never dominated by strategic altruism, from the collective good point of view}.

The paper is organized as follows. Section \ref{sec:literature} places the paper in the literature. Section~\ref{sec:model} sets
out the implementation theory model,
Berge equilibrium concept, and the cross-profile construction that
underpins all our results. Section~\ref{sec:impossibility}
presents the impossibility results: the non-implementability of
the weak Pareto rule and the king-maker rule, as well as the dictatorship
theorem. Section~\ref{sec:berge-implies-nash} establishes that
Berge implementability is strictly more demanding than Nash
implementability and develops the altruism tax interpretation.
Section~\ref{sec:conclusion} concludes and discusses extensions.

\section{Related Literature}
\label{sec:literature}

\paragraph{Implementation theory.}
The implementation problem was formalized by \citet{hurwicz1972},
who asked whether a planner can design a game whose equilibrium
outcomes coincide with the ones recommended by a given social choice rule even when individual
preferences are unknown. \citet{maskin1999nash}, in his influential paper, provided the
foundational answer in a remarkably general social choice environment. He showed
that monotonicity is \emph{necessary} for Nash implementability: if an outcome is
selected in one state, it must continue to be selected in any state where it has
not moved down in any agent's ranking. He then showed that, for three or more
agents, monotonicity together with the auxiliary condition of no-veto power (if
all but one agent top-rank an outcome, then it must be selected) is \emph{sufficient}.
 
A complete characterization was provided by \citet{moore1990nash}, who closed the
gap between these necessary and sufficient conditions: Nash implementability is
equivalent to their Condition $\mu$ for three or more agents, and to their
Condition $\mu_2$ for two agents (see also \citealt{dutta1991necessary} for the
two-agent case). The two-agent case is significantly harder, and it is there that
the closest antecedent to our dictatorship theorem appears:
\citet{hurwicz1978construction} showed that with two agents on the
unrestricted domain of strict preferences, any weakly Pareto efficient and Nash-implementable
rule must be dictatorial. For three or more agents, however,
this impossibility disappears, and efficient, non-dictatorial
rules become Nash-implementable. Our main theorem shows that, in the Berge case, this impossibility persists for any number of agents.
 
While the full implementation literature has explored many extensions of
the Nash benchmark,\footnote{For example, strong Nash implementation \citep{dutta1991implementation,sang1996algorithm}, subgame-perfect implementation
\citep{moore1988subgame}, virtual implementation
\citep{abreu1991virtual}, Bayesian implementation
\citep{jackson1991bayesian}, and implementation in
undominated strategies \citep{palfrey1990undominated} among others. See also \cite{kartik2012implementation}, or \cite{gavan2025safe} for more recent contributions.} most, if not all of the contributions, operate under the assumption that agents maximize their own gain. On the contrary,
we pursue a relatively unexplored path: we change the behavioral assumption placed on
agents from selfishness to strategic altruism.\footnote{There is, however, a related literature on full implementation without rationality assumptions. See \cite{de2014behavioral}, \cite{hayashi2023behavioral}, or \cite{barlo2023behavioral}.} This approach is closer in spirit to \citet{roemer2010kantian},
who studied implementation under Kantian equilibrium, but
arrives at sharply different conclusions: in economic environments, the proportional solution, which is efficient and non-dictatorial, is implementable in Kantian equilibrium.

The consequences of other-regarding preferences have nevertheless been explored in partial implementation settings, with mostly positive results. \cite{kucuksenel2012behavioral} shows that increasing the level of altruism among agents tends to increase the level of public good provision towards its efficient level. \cite{kozlovskaya2019public} study strategy-proofness with reciprocal agents and show that with a sequential pivot mechanism, efficient provision can be implemented in a strategy-proof manner. \cite{bierbrauer2016mechanism} assume intention-based social preferences based on the model of \cite{rabin1993incorporating} and show that, when the weight of kindness in agents' utility functions is commonly known, Pareto efficient outcomes are implementable.\footnote{See also \cite{bierbrauer2017robust}.} On the contrary, in our setting, Berge implementation is in general incompatible with efficiency.

Finally, our results are also related to the literature on implementation with pro-social motives as ``tie-breaking rules'', that is, motives that trigger when agents are indifferent between outcomes. The most well-studied one is partial honesty \citep{matsushima2008role,dutta2012nash, korpela2014bayesian,  lombardi2020partially,savva2018strong}. A partially-honest agent is an agent who deceives the mechanism designer only when
the truth poses some obstacle to her material well-being. Thus, she does not deceive when
the truth is equally efficacious.\footnote{For other types of tie-breaking rules, see \cite{kimya2017nash}, or \cite{savva2021motives}.}  In general, the results from this exercise are quite permissive from the design point of view, in that they can expand the set of implementable rules compared to Nash implementation. By contrast, we show that when agents are strategic altruists, the set of such rules shrinks.

\paragraph{Berge and other ``cooperative'' equilibrium concepts.}
Berge equilibrium was introduced by
\citet{berge1957theorie} and formalized for differential games in \citet{zhukovskiy1985}.\footnote{For some more recent existence results, see \cite{haller2025existence}, or \cite{riedel2026berge}.} More recently, Berge equilibrium has attracted attention through
\citet{COLMAN2011166}, who explore its implications in social dilemmas and related applications, arguing for its compellingness in capturing the essence of altruism. More results on the normative dimension of Berge equilibrium can be found in \citet{salukvadze2020berge} and
\citet{haller2024berge}.

The relationship between Berge and other equilibrium notions 
has been studied in several directions. \citet{courtois2015play} endogenize whether players act like Nash or Berge and study the consequences of such situational approach for some social dilemmas. \citet{crettez2017berge_sugden} links Berge equilibrium to
Sugden's notion of mutually beneficial practice \citep{sugden2011mutual,
sugden2015team}, showing that Berge equilibria can be
interpreted as the strategic foundation of team reasoning in
some classes of games. Finally, \cite{crettez2019unilateral}, \cite{crettez2020existence} and \cite{schouten2019unilateral} study unilateral support equilibria, where each agent ``supports'' some player and provide connections with Berge equilibrium. 

Kantian equilibrium was introduced in \cite{roemer2010kantian} and \cite{roemer2015kantian} and expanded in \cite{roemer2019we}. The idea behind Kantian equilibrium is that when a player considers a possible deviation from a given strategy profile, she assumes that everyone will do so as well in a proportional manner, thus incorporating a notion of Kant's categorical imperative into agents' strategic concerns. In \citet{roemer2021socialism},
Kantian equilibrium is presented as an alternative behavioral foundation
for socialist institutions, interpreting cooperative behavior
as an ethos compatible with decentralized production. Contrary to Berge though, in a Kantian equilibrium, each agent maximizes her own payoff. On this, \citet{UNVEREN20231} show
that in symmetric games efficient Berge equilibria are generically also Kantian
equilibria, establishing a deep
connection between other-regarding and rule-based cooperation. This also implies that even though the two
norms diverge
sharply in their institutional properties, they are observationally equivalent in social dilemmas.

Our results also relate to more recent studies on the theory of cooperation and its evolutionary foundations, as in \cite{alger2016evolution}, \cite{alger2019evolutionary}, or \cite{alger2020evolution}. Even more related to Berge equilibrium,
\citet{alger2016evolution} show that a class of preferences
combining self-interest with a probabilistic version of Kant's
categorical imperative which they call \emph{Homo moralis}, is
evolutionarily stable under assortative matching. Taken together with the result of \citet{UNVEREN20231}, this implies that in symmetric social dilemmas, sufficiently assortative matching selects through
evolutionary pressure the same cooperative behavior that Berge
equilibrium prescribes. Evolution thus provides a microfoundation
for the Berge behavioral assumption that does not rely on agents
consciously following the Golden Rule.

\paragraph{Moral and other-regarding preferences.} Finally, our results are related to the vast literature on other-regarding preferences, as in \cite{fehr1999theory}, \cite{bolton2000erc}, or \cite{andreoni1990impure}. However, these papers incorporate motives such as inequality-aversion, reciprocity, or altruism at the \textit{preference level}, while our contribution studies other-regarding motives and altruism specifically, at the \textit{strategic level}. We feel this approach is relatively unexplored, and we see our (mostly negative) results as a first step toward incorporating such other-regarding strategic concerns in mechanism design.

\section{The Model}
\label{sec:model}
 
\subsection*{Society, Outcomes, and Preferences}
\label{subsec:primitives}
 
There is a finite set of \emph{agents} $N = \{1, \ldots, n\}$, with
$n \geq 2$. One can think of them as voters in a committee, partners in a
firm, or citizens subject to a collective decision. There is also a
finite set of \emph{outcomes} $X$, which can be interpreted as the set of electoral candidates, allocations,
or social alternatives in general.
 
Agents care about which outcome is selected, and their preferences
may depend on circumstances such as the state of technology. This is modeled by saying that agent $i$'s preferences depend on the \emph{state of the world} $\theta$ drawn from a set $\Theta$. In
state $\theta$, agent $i$ holds a preference ordering\footnote{That is, a complete and transitive binary relation on $X$.} $R_{i}(\theta)$
over $X$: we write $x \, R_{i}(\theta) \, y$ to mean that $i$ weakly
prefers $x$ to $y$ in state $\theta$, $x \, P_{i}(\theta) \, y$ for
the strict preference ($xR_i(\theta)y$ and not $yR_i(\theta)x$) and $xI_i(\theta)y$ for indifference (both $xR_i(\theta)y$ and $yR_i(\theta)x$ are true).
A profile $R(\theta) = (R_1(\theta), \ldots, R_n(\theta))$ collects the
preferences of all agents. For non-empty sets $A, B \subseteq X$ we write
$A \, P_{i}(\theta) \, B$ to mean that $a \, P_{i}(\theta) \, b$ for every $a \in A$ and
every $b \in B$.
 
A preference ordering is called \emph{strict} if no agent is ever
indifferent between two distinct outcomes. The \emph{unrestricted domain
of strict preferences}, denoted $\Thst$, consists of all states in which
every agent holds a strict ordering over $X$. Essentially, when $\Theta\supseteq\Thst$, we
do not restrict how agents strictly rank the alternatives: any such strict preference is under consideration.
 
Two sets derived from a preference ordering will appear
throughout the analysis. For any outcome $x$, state $\theta$, and agent
$i$, define
\begin{align*}
  L_{i}(x, \theta) &= \{ y \in X : x \, R_{i}(\theta) \, y \},\\
  \SU_{i}(x, \theta) &= \{ y \in X : y \, P_{i}(\theta) \, x \}.
\end{align*}
$L_{i}(x, \theta)$ is the \emph{lower-contour set} of $i$ at $x$: everything
$i$ finds weakly worse than $x$ in state $\theta$, including $x$ itself.
$\SU_{i}(x, \theta)$ is the \emph{strict upper-contour set}: everything
$i$ strictly prefers to $x$. The two sets partition $X$, with $x$ sitting
in $L_{i}(x, \theta)$ but not in $\SU_{i}(x, \theta)$. Reading the
lower-contour set of $i$ at $x$ as the outcomes that cannot improve upon $x$
for $i$ will be useful when we state the equilibrium conditions below.
 
\subsection*{The Planner's Problem}
\label{subsec:planner}
 
A hypothetical \emph{planner} (a constitution-writer, a market regulator, a committee
chair, or the set of agents at an ex-ante stage) has in mind a rule that selects her desirable outcomes in each state of the world. This mapping is the object she would
like to ``implement.''
 
\begin{definition}[Social choice rule]
A \emph{social choice rule} (SCR) is a correspondence
$F \colon \Theta \rightrightarrows X$ that assigns to each state $\theta$
a nonempty set $F(\theta) \subseteq X$ of \emph{socially optimal} outcomes.
\end{definition}
 
The SCR encodes the planner's normative objectives: it records, for every
possible configuration of preferences, which outcomes society should select.
Examples include the Pareto correspondence (select all outcomes that are not
unanimously dominated), majority-voting rules, and egalitarian allocations.
The planner would like to guarantee that the outcome chosen by society
actually belongs to $F(\theta)$, whatever $\theta$ turns out to be.

Moreover, a benevolent planner would potentially like $F$ to represent all agents' preferences to some extent, or would at least not want it to always coincide with the top outcomes of a single agent. A rule of the latter kind is \textit{dictatorial}: it is the most extreme case of power concentration, and we define it below:

\begin{definition}[Dictatorial SCR]
A SCR $F$ is \emph{dictatorial} if there exists an agent $i \in N$, called the
\emph{dictator}, such that for all $\theta \in \Theta$ and all $x \in X$,
\[
  x \in F(\theta) \iff X \subseteq L_{i}(x, \theta).
\]
Equivalently, in every state $\theta$, the rule selects agent $i$'s most-preferred
outcome.
\end{definition}

What makes the implementation problem non-trivial is that the planner does not observe $\theta$ directly.
Only the agents know the true state, and they may have incentives to
misrepresent it. The planner must therefore elicit the relevant information
through a procedure that agents are willing to participate in honestly, or
at least one whose equilibrium outcomes coincide with $F(\theta)$ regardless
of what the agents privately know.

\subsection*{Mechanisms}
\label{subsec:mechanisms}
 
The institutional procedure the planner designs is modeled as a
\emph{mechanism}: a game form in which agents submit messages and the
outcome is determined by those messages.
 
\begin{definition}[Mechanism]
A \emph{mechanism} $\Gamma = (M, g)$ consists of:
\begin{enumerate}[\upshape(i)]
\item a \emph{message space} $M = \prod_{i \in N} M_{i}$, where $M_{i}$
is the set of messages (or ``strategies'') available to agent $i$; and
\item an \emph{outcome function} $g \colon M \to X$ that maps every profile
of messages to an outcome.
\end{enumerate}
\end{definition}
 
The planner commits to the mechanism before the agents play it that is, she publishes
the rules of the game before any agent takes any action.
The agents then observe $\theta$ privately and simultaneously choose their
messages. The outcome $g(m)$ is then implemented.
 
A message profile in which only agent $i$ departs from some reference
profile $m$ by playing $m'_i$ is written $(m_{i}', m_{-i})$, where $m_{-i}$ collects the
messages of every agent other than $i$. Symmetrically, if all agents
\emph{except} $i$ deviate to $m'_{-i}$, the resulting profile is written
$(m_{i}, m_{-i}')$. Finally, let $g(M)=\{x\in X:x=g(m)$ for some $m\in M\}$ be the range of the mechanism.
 
\subsection*{Strategic Altruism: The Berge Equilibrium}
\label{subsec:berge}
 
The standard assumption in mechanism design is that agents are
purely \emph{self-interested}: each agent chooses the message that
maximizes her own payoff, taking everyone else's messages as fixed.
The resulting solution concept is Nash equilibrium, in which no
agent can improve her situation by unilaterally changing her message.
 
This paper studies the case when agents are \emph{altruistic}
in a specific and formally precise sense, as formalized by Berge equilibrium.
In a Berge equilibrium, it is not each agent who maximizes her own
payoff; it is each agent's \emph{opponents} who collectively
maximize \emph{her} payoff. Every player devotes her strategic
effort entirely to advancing others' interests.\footnote{An intuitive way of thinking about Berge equilibrium is that for each agent $i$, all others ``have got $i$'s back'' in that they collectively support $i$.} We call this
behavioral protocol \emph{strategic altruism} to distinguish it from other types of altruism that occur at the preference level.
 
\begin{definition}[Berge equilibrium]
\label{def:BE}
A message profile $m \in M$ is a (pure strategy) \emph{Berge equilibrium} at state
$\theta$ if, for every agent $i \in N$ and any message
profile $m_{-i}' \in M_{-i}$,
\[
  g(m) \, R_{i}(\theta) \, g(m_{i}, m_{-i}').
\]
\end{definition}
 
The condition says that, when we fix agent $i$'s own message at $m_i$, no
reallocation of her opponents' messages can produce an outcome that
$i$ strictly prefers to $g(m)$. In other words, $i$'s opponents
are already doing the best they can \emph{for $i$}. This is the
mirror image of Nash equilibrium.\footnote{For completeness, we define Nash equilibrium as well: A message profile $m\in M$ is a \textit{Nash equilibrium} at state $\theta$ if, for any $i\in N$ and $m'_i\in M_i$, we have $g(m)R_i(\theta)g(m'_i,m_{-i})$.} In Nash, it is $i$ herself who
cannot do better for herself; in Berge, it is $i$'s opponents who
cannot do better for $i$. We view Berge as standing on the other side of the ``behavioral spectrum'', opposite to Nash equilibrium, with Nash representing extreme selfishness and Berge representing extreme altruism. Let $BE(\Gamma,\theta)$ be the set of Berge equilibria and $NE(\Gamma,\theta)$ be the set of Nash equilibria of $\Gamma$ at state $\theta$.

\subsection*{Implementation}
\label{subsec:implementation}
 
We now describe the planner's design problem. She must construct a mechanism
$\Gamma$ such that, in \emph{every} state $\theta$, the set of equilibrium
outcomes of $\Gamma$ coincides exactly with $F(\theta)$. This consists of two requirements. \emph{Existence} says that every socially optimal outcome must be
supported as an equilibrium outcome: In equilibrium, the mechanism should not miss any
alternative that the planner considers optimal. \emph{Uniqueness} says that every equilibrium must
produce a socially optimal outcome: the mechanism should not select anything in equilibrium that
the planner does not consider optimal. Together they guarantee that the planner's goal is
implemented fully, in every state.\footnote{For arguments on why full implementation rather than partial, that is, only a subset of the socially optimal outcomes being supported as equilibrium ones, should be the desideratum, see \cite{thomson1996concepts}.}
 
\begin{definition}[Berge implementation]
A mechanism $\Gamma = (M, g)$ \emph{Berge-implements}\ the SCR $F$ if, for
every state $\theta \in \Theta$,
\begin{enumerate}[\upshape(i)]
\item[\upshape(i)] \emph{(Existence)} for every $x \in F(\theta)$, there
exists a Berge equilibrium $m \in M$ with $g(m) = x$; and
\item[\upshape(ii)] \emph{(Uniqueness)} for every Berge equilibrium $m$ at
$\theta$, $g(m) \in F(\theta)$.
\end{enumerate}
We say that $F$ is \emph{Berge-implementable} if such a $\Gamma$ exists.
\end{definition}
 
The contrast with Nash implementation is definitional: $\Gamma$
\emph{Nash-implements} $F$ if the same two conditions hold with Berge
equilibria replaced by Nash equilibria, and $F$ is \emph{Nash-implementable}
if such a $\Gamma$ exists. All the formal apparatus is identical and only
the behavioral assumption changes. This makes the comparison between the two
implementation notions clean, as any difference in the set of implementable SCRs
is attributable purely to the difference between strategic altruism and
self-interest.
 
\subsection*{Efficiency}
\label{subsec:efficiency}
 
The planner's social goal will typically incorporate some notion of efficiency.
 
\begin{definition}[Weak Pareto efficiency]
An outcome $x \in X$ is \emph{weakly Pareto dominated} at $\theta$ if there
exists $y \in X$ such that every agent strictly prefers $y$ to $x$:
$y \, P_{i}(\theta) \, x$ for all $i \in N$.
 
The \emph{weak Pareto rule} $\PO$ selects all outcomes that are not weakly
Pareto dominated:
\[
  \PO(\theta) = \bigl\{ x \in X : \text{there is no } y \in X \text{ with }
  y \, P_{i}(\theta) \, x \text{ for all } i \in N \bigr\}.
\]
A SCR $F$ satisfies the \emph{Pareto property} (is Pareto efficient, or simply efficient) if $F(\theta) \subseteq
\PO(\theta)$ for every $\theta$.
\end{definition}
 
An outcome outside $\PO(\theta)$ is one that \emph{every} agent strictly
prefers to abandon: everyone agrees it is not good enough. The Pareto
property demands no more than this, as it rules out only outcomes that
are unanimously condemned. It is, in this sense, the mildest efficiency
requirement one can impose on a social goal. The first result of this
paper shows that strategic altruism cannot meet even this minimal bar.

\subsection*{The Cross-Profile Construction}
\label{subsec:crossprofile}
 
Before we proceed to our results, we need a few pieces of extra notation. Specifically, the analysis that follows requires reasoning about what happens when
equilibrium strategies from \emph{different} states are combined. For this, given a SCR $F$, let
\[
  \Sigma^{F} = \bigl\{ (x^{i}, \theta^{i})_{i \in N} :
  x^{i} \in F(\theta^{i}) \text{ for all } i \in N \bigr\}
\]
be the set of \emph{cross-profile tuples}, that is, collections in which each
agent $i$ is paired with a state--outcome pair $(x^i, \theta^i)$
consistent with $F$. Each agent $i$'s pair comes from a possibly
different state: agent $1$ may be thinking about state $\theta^1$,
agent $2$ about state $\theta^2$, and so on. A way to think about this construction is as if each agent is making an ``announcement'': agent $1$ claims that the true state is $\theta^1$ and that $x^1$ should be implemented, agent $2$ makes a claim about $(x^2,\theta^2)$ and so on. 

Now, suppose that $\Gamma=(M,g)$ Berge-implements $F$ and let $m(x,\theta)$ be a Berge equilibrium in state $\theta$ such that $g(m(x,\theta))=x$. Then, for any $\sigma=(x^i,\theta^i)_{i\in N} \in \Sigma^F$, let $m^{\sigma}=\bigl(m_1(x^1,\theta^1),m_2(x^2,\theta^2),\ldots,m_n(x^n,\theta^n)\bigr)$ be the message profile in which each agent $i$ plays her part of a Berge equilibrium message profile $m_i(x^i,\theta^i)$, according to $\sigma$. Finally, given $\sigma\in \Sigma^F$, let:

\begin{center}
$O_i^B(x^i,\theta^i)=\bigl\{g\bigl(m_i(x^i,\theta^i),m'_{-i}\bigr):m'_{-i}\in M_{-i}\bigr\}$.
\end{center}

$O_i^B(x^i,\theta^i)$ has a useful interpretation. It is the set of outcomes that $i$'s opponents can induce given that $i$ has played her part of a Berge equilibrium according to $\sigma$. Write $w(\sigma)\equiv g(m^{\sigma})$ for the \emph{cross-profile outcome}. Note that $m^{\sigma}$ is a Berge equilibrium at a state $\theta$ if and only if $O^B_i(x^i,\theta^i)\subseteq L_i\bigl(w(\sigma),\theta\bigr)$ for every $i\in N$. The relevant benchmark is the outcome that $m^{\sigma}$ actually induces, not agent $i$'s own target $x^i$. These cross-profile constructions will be used in proving our main impossibility results.

\section{Impossibility Results}
\label{sec:impossibility}

We now establish the paper's central negative results. The organizing idea is that strategic
altruism imposes a structural constraint on every
implementing mechanism that is absent under Nash equilibrium. This constraint, isolated in
Lemma \ref{lem:cross} below, is the common source of both impossibility results.

The first result shows that even the mildest efficiency standard---weak Pareto
optimality---cannot be achieved by any mechanism when agents behave as strategic
altruists. This finding is already surprising: one might expect agents who actively look
after one another's welfare to have no difficulty selecting outcomes no one is unanimously
worse off with. The second and possibly more striking result shows that the difficulty runs much
deeper: on the unrestricted domain of strict preferences, the \emph{only} efficient social
goals that a strategically altruistic society can implement are \emph{dictatorial}. Both results
flow from a single, elementary property of Berge equilibrium that we isolate as a lemma before
proceeding to the main theorems.

\subsection{A Key Structural Constraint: The Cross-Profile Bound}
\label{subsec:crossprofbound}

The following Lemma isolates the only consequence of Berge implementation that the rest of
this section uses, and it concerns the cross-profile construction. It essentially says that when every agent plays the message from their own individual
equilibrium scenario, the resulting outcome must be weakly worse for each agent, than the
outcome in their own scenario.

\begin{lemma}[Cross-profile bound]\label{lem:cross}
Suppose $\Gamma = (M, g)$ Berge-implements $F$. Then, for every profile
$\sigma = (x^{i}, \theta^{i})_{i \in N} \in \Sigma^{F}$,
\[
  w(\sigma) \in \bigcap_{i \in N} L_{i}(x^{i}, \theta^{i}).
\]
In words: every agent weakly prefers her own target $x^{i}$ at state $\theta^i$ to the cross-profile outcome
$w(\sigma)$.
\end{lemma}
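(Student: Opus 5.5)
The argument is a direct unpacking of the Berge equilibrium condition applied agent by agent. Fix $\sigma = (x^{i}, \theta^{i})_{i \in N} \in \Sigma^{F}$. By Existence, each $x^{i} \in F(\theta^{i})$ is supported by a Berge equilibrium $m(x^{i},\theta^{i})$ at $\theta^{i}$ with $g\bigl(m(x^{i},\theta^{i})\bigr) = x^{i}$. These are exactly the messages used to build $m^{\sigma}$. It suffices to show, for an arbitrary fixed agent $i$, that $x^{i}\, R_{i}(\theta^{i})\, w(\sigma)$, i.e.\ $w(\sigma) \in L_{i}(x^{i},\theta^{i})$. Intersecting over $i$ then gives the claim.

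For fixed $i$, I would observe that $m^{\sigma}$ agrees with $m(x^{i},\theta^{i})$ in agent $i$'s component. Its remaining components $m^{\sigma}_{-i} = \bigl(m_j(x^{j},\theta^{j})\bigr)_{j \neq i}$ form some element $m'_{-i}$ of $M_{-i}$. Hence
\[
m^{\sigma} = \bigl(m_i(x^{i},\theta^{i}), m'_{-i}\bigr),
\]
so $w(\sigma) = g(m^{\sigma}) \in O^{B}_{i}(x^{i},\theta^{i})$. Now I would apply Definition~\ref{def:BE} to the profile $m(x^{i},\theta^{i})$, which is a Berge equilibrium at state $\theta^{i}$, with agent $i$ and the opponents' deviation $m'_{-i}$. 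This yields
\[
x^{i} = g\bigl(m(x^{i},\theta^{i})\bigr)\, R_{i}(\theta^{i})\, g\bigl(m_i(x^{i},\theta^{i}), m'_{-i}\bigr) = w(\sigma),
\]
which is precisely $w(\sigma) \in L_{i}(x^{i},\theta^{i})$. Equivalently, the Berge condition at $\theta^{i}$ says $O^{B}_{i}(x^{i},\theta^{i}) \subseteq L_{i}(x^{i},\theta^{i})$, and $w(\sigma)$ lies in the former set.

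There is no real obstacle. The only point needing care is the bookkeeping. The Berge condition quantifies over \emph{all} opponent profiles $m'_{-i}$, not only unilateral ones, so the simultaneous substitution of every opponent's message from a different state is allowed. This is the difference from Nash, where only one agent's message changes and the analogous bound fails for $n \ge 3$. I would also note that the equilibrium selection $m(x,\theta)$ exists by Existence, and uses a choice of one equilibrium per pair, so $m^{\sigma}$ is well defined. Uniqueness is not needed.
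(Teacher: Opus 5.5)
Your proposal is correct and is essentially the paper's own argument: $m^{\sigma}$ shares agent $i$'s component with $m(x^i,\theta^i)$, so $w(\sigma)\in O^B_i(x^i,\theta^i)\subseteq L_i(x^i,\theta^i)$ by the Berge condition at $\theta^i$, and intersecting over $i$ gives the bound. Your side remarks are accurate and consistent with the paper: the Berge condition quantifies over all opponent profiles, and only Existence is used.
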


\begin{proof}
Fix any agent $i$. In the cross-profile message $m^{\sigma}\equiv (m^{\sigma}_i,m^{\sigma}_{-i})$, agent $i$ plays
$m_{i}(x^{i}, \theta^{i})$, while all $j\neq i$ (that is, her opponents) play $m_{j}(x^j,\theta^j)$. By the definition of
$O^B_{i}(x^i,\theta^i)$, the cross-profile outcome satisfies
$w(\sigma) \in O^B_{i}(x^i,\theta^i)$. Since $m(x^{i}, \theta^{i})$ is a Berge
equilibrium at $\theta^{i}$, $i$'s opponents cannot induce an outcome strictly better for $i$
than $x^{i}$, which means $O^B_{i}(x^i,\theta^i) \subseteq L_{i}(x^{i}, \theta^{i})$.
Thus, $w(\sigma) \in L_{i}(x^{i}, \theta^{i})$. Since $i$ was arbitrary, the intersection
bound follows.
\end{proof}

The Lemma says that the agents' individual targets always admit a common ``floor'': an
outcome $w(\sigma)$ that no agent strictly prefers to her own target. The Lemma imposes this
floor requirement as a \emph{necessary} condition on every Berge-implementable SCR. Under Nash equilibrium, no comparable constraint
exists: self-interested agents make no promise about what they will do
in \emph{other agents'} scenarios, so no cross-profile bound can be
derived. This is precisely why the two behavioral assumptions lead to
such different institutional possibilities.

\subsection{The Weak Pareto Rule Is Not Berge-Implementable}
\label{subsec:pareto-impossible}

In the following simple three-agent, three-outcome example, we show that the weak Pareto rule cannot be implemented under strategic altruism.

\begin{example}[Failure of the weak Pareto rule]\label{ex:wp}
Let $N = \{1, 2, 3\}$, $X = \{x, y, z\}$, and consider two states $\theta$ and $\theta'$ with
the following preference orderings (the most-preferred outcome listed first):

\begin{center}
\begin{tabular}{ccc@{\quad}ccc}
\multicolumn{3}{c}{State $\theta$} & \multicolumn{3}{c}{State $\theta'$}\\[3pt]
$R_{1}(\theta)$ & $R_{2}(\theta)$ & $R_{3}(\theta)$ &
$R_{1}(\theta')$ & $R_{2}(\theta')$ & $R_{3}(\theta')$\\
\midrule
$x$ & $z$ & $y$ & $y$ & $z$ & $y$\\
$y$ & $x$ & $x$ & $z$ & $y$ & $z$\\
$z$ & $y$ & $z$ & $x$ & $x$ & $x$
\end{tabular}
\end{center}

In state $\theta$, each outcome tops some agent's ranking, so $\PO(\theta) = \{x, y, z\}$. In
state $\theta'$, the outcome $x$ is ranked last by every agent and is therefore unanimously
dominated; hence $\PO(\theta') = \{y, z\}$.

We claim that \emph{no mechanism Berge-implements $\PO$.} Suppose for the sake of contradiction that $\Gamma = (M, g)$
does implement it. Then for each pair $(\bar{x}, \bar{\theta})$ with $\bar{x} \in \PO(\bar{\theta})$, there
is a Berge equilibrium $m(\bar{x}, \bar{\theta})$, with $g(m(\bar{x}, \bar{\theta})) = \bar{x}$.
Consider the cross-profile message
\[
  m = \bigl(m_{1}(z, \theta),\; m_{2}(x, \theta),\; m_{3}(z, \theta')\bigr).
\]
This profile is feasible, so $g(m) \in \{x, y, z\}$. We show each possibility leads to a
contradiction.

\begin{itemize}
\item \emph{If $g(m) = x$:} Agent $1$ is holding her message at $m_{1}(z, \theta)$ while
agents $2$ and $3$ jointly change the outcome to $x$. Since $x \, P_{1}(\theta) \, z$, the
opponents of agent $1$ have made her strictly better off than her target $z$, contradicting
that $m(z, \theta)$ is a Berge equilibrium at $\theta$.

\item \emph{If $g(m) = z$:} Agent $2$ is holding her message at $m_{2}(x, \theta)$ while
agents $1$ and $3$ change the outcome to $z$. Since $z \, P_{2}(\theta) \, x$, the opponents
of agent $2$ have made her strictly better off than her target $x$, contradicting that
$m(x, \theta)$ is a Berge equilibrium at $\theta$.

\item \emph{If $g(m) = y$:} Agent $3$ is holding her message at $m_{3}(z, \theta')$ while
agents $1$ and $2$ change the outcome to $y$. Since $y \, P_{3}(\theta') \, z$, the opponents
of agent $3$ have made her strictly better off than her target $z$, contradicting that
$m(z, \theta')$ is a Berge equilibrium at $\theta'$.
\end{itemize}

Every case yields a contradiction, so no such mechanism exists. \qed
\end{example}

\medskip

The obstruction is worth pausing on. Under strategic altruism, the equilibrium at $(z, \theta)$
protects agent $1$: her opponents cannot push the outcome above $z$ for her. Similarly, the
equilibrium at $(x, \theta)$ protects agent $2$ from any outcome above $x$, and the
equilibrium at $(z, \theta')$ protects agent $3$ from anything above $z$. When the agents mix
their equilibrium messages (which is a feasible message profile), the mechanism must produce a single outcome that violates none of
these three protections simultaneously (weakly worse than $z$ for agent $1$, weakly worse
than $x$ for agent $2$, and weakly worse than $z$ for agent $3$). But $\bigcap_{i}
L_{i}(x^{i}, \theta^{i}) = L_{1}(z, \theta) \cap L_{2}(x, \theta) \cap L_{3}(z, \theta') =
\{z\} \cap \{x, y\} \cap \{x, z\} = \emptyset$: no such outcome exists, and the mechanism breaks.

It is instructive to contrast this with Nash equilibrium. Under Nash, each agent is protecting
only herself: no agent's opponent is constrained to act in that agent's interest. The
cross-profile message does not impose the same simultaneous restrictions, and indeed $\PO$ is
Nash-implementable whenever $n \geq 3$ \citep{maskin1999nash}. Now, while the fact that a Berge equilibrium need not be Pareto efficient in some games is well-known,\footnote{See for example \cite{haller2024berge}.} our result is far more general. We find that there are cases where \textit{no game form can support the whole set of Pareto efficient outcomes as Berge equilibria}.

The example above raises a natural follow-up question. Since implementing the \emph{entire} weak Pareto
correspondence is too demanding, perhaps a planner could implement a \emph{selection} from
it. Indeed, in most cases, weak Pareto efficiency alone is too poor as a normative criterion and other considerations, such as no-envy, are imposed on top of it. The following results show that this avenue is almost entirely
closed.

\subsection{The King-Maker Rule}
\label{subsec:kingmaker}

Example~\ref{ex:wp} shows that strategic altruism cannot guarantee even weak Pareto
efficiency. One might still hope that, among efficient rules, some that \emph{share} power
rather than concentrate it could survive. A natural test case is the \emph{king-maker
rule} of \citet{maskin1999nash}.

Fix a distinguished agent, say agent $1$, the king-maker, and let $F^{KM}$ select, at
each state, every outcome that is top-ranked by some \emph{other} agent:
\[
  F^{KM}(\theta) = \bigl\{ x \in X : x \text{ is the most-preferred outcome of some }
  j \in N \setminus \{1\} \bigr\}.
\]
The mechanism \citet{maskin1999nash} proposes for it is disarmingly simple: agent $1$
names one of the others as ``king,'' and the king then chooses the outcome freely. The rule
is appealing on three counts. It is weakly Pareto efficient, since an outcome that tops some
agent's ranking can never be unanimously dominated. It is of course non-dictatorial: no
single agent's favorite is always chosen, and with three or more agents the rule typically
selects several outcomes (the favorites of several different people). And, crucially, 
\citet{maskin1999nash} shows it \emph{is} implementable in Nash equilibrium whenever there
are three or more agents. Self-interested players, left to the king-maker game, produce
exactly these outcomes. Strategically altruistic players, on the other hand, do not.

\begin{example}[Failure of the king-maker rule]\label{ex:km}
Let $N = \{1, 2, 3\}$ with agent $1$ as king-maker, let $X = \{a, b, c\}$, and take the
single state $\theta$ with preferences (most preferred first):
\[
  \begin{array}{ccc}
    R_{1}(\theta) & R_{2}(\theta) & R_{3}(\theta)\\[3pt]
    \hline\\[-8pt]
    a & a & b\\
    c & b & c\\
    b & c & a
  \end{array}
\]
Here $F^{KM}(\theta) = \{a, b\}$, since agent $2$ top-ranks $a$ and agent $3$ top-ranks
$b$. Suppose some mechanism $\Gamma$ Berge-implemented $F^{KM}$, and consider the
cross-profile $\sigma = \bigl((b,\theta),(b,\theta),(a,\theta)\bigr) \in \Sigma^{F^{KM}}$,
in which agents $1$ and $2$ each target the selected outcome $b$ while agent $3$ targets
the selected outcome $a$. By the cross-profile bound (Lemma \ref{lem:cross}), the
cross-profile outcome $w$ would have to satisfy
\[
  w \in L_{1}(b,\theta) \cap L_{2}(b,\theta) \cap L_{3}(a,\theta)
    = \{b\} \cap \{b,c\} \cap \{a\} = \emptyset,
\]
which is impossible. Hence no mechanism Berge-implements $F^{KM}$. \qed
\end{example}

The contradiction is straightforward to read off the preference table. Agent $1$ ranks $b$
last, so the only outcome she weakly prefers to her target $b$ is $b$ itself, hence the
equilibrium protection pins the cross-profile outcome to $b$. Agent $3$ ranks $a$ last,
so her protection pins it to $a$. No outcome is simultaneously weakly below both $b$ (for
agent $1$) and $a$ (for agent $3$), and no mechanism can satisfy both
protections at once.

Self-interested agents are spared from the above constraints because they protect no one. In the king-maker
game, the designated king simply chooses, and a deviation by any other player changes
nothing. When agents are strategically altruistic, each player's least-preferred
\emph{selected} outcome becomes a constraint, and here two of those constraints are
irreconcilable.

This example already carries the paper's thesis in miniature: \emph{Berge implementation
is strictly more demanding than Nash implementation.} There exists an efficient and
non-dictatorial rule that a society of self-interested agents can implement and a society of
strategically altruistic ones cannot. The dictatorship theorem, to which we now turn,
shows that the king-maker rule is not an isolated case. 

\subsection{The Dictatorship Theorem}
\label{subsec:dictatorship}

We are now ready to state the paper's main result:

\begin{theorem}[Berge dictatorship]\label{thm:dictator}
If a SCR $F$ satisfies the Pareto property and is Berge-implementable on $\Thst$, then $F$
is dictatorial.
\end{theorem}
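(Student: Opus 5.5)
The plan is to extract two ingredients from Berge implementability and then run a Muller--Satterthwaite-type argument on $\Thst$. The first ingredient is \emph{Maskin monotonicity}, which I expect to follow directly from the definitions. Let $x\in F(\theta)$ and let $m=m(x,\theta)$ be the corresponding equilibrium, so that $O^B_i(x,\theta)\subseteq L_i(x,\theta)$ for every $i$. Suppose $\theta'$ satisfies $L_i(x,\theta)\subseteq L_i(x,\theta')$ for all $i$. Then $O^B_i(x,\theta)\subseteq L_i(x,\theta')$ for all $i$, so $m$ is a Berge equilibrium at $\theta'$, and uniqueness gives $x\in F(\theta')$. The second ingredient is the cross-profile bound of Lemma~\ref{lem:cross}. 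I will use it through one observation: if in $\sigma$ two agents $i\neq k$ have targets $x^i\neq x^k$, and each target is the bottom of that agent's ranking at her own state, then $L_i(x^i,\theta^i)\cap L_k(x^k,\theta^k)=\{x^i\}\cap\{x^k\}=\emptyset$. This contradicts the Lemma, exactly as in Example~\ref{ex:km}. Call agent $i$ \emph{exposed at $a$} if there is a state $\theta\in\Thst$ in which $a$ is $i$'s worst outcome and $a\in F(\theta)$. The observation then says that two distinct agents cannot be exposed at two distinct outcomes.

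The body of the proof has three steps. Step 1 shows that non-dictatorship produces exposed agents. Suppose $x\in F(\theta)$ and $x$ is not the top of agent $i$. I would use monotonicity to lift $x$ to the top of every agent $j\neq i$, and to move everything $i$ ranks below $x$ further down. By the Pareto property, the outcomes still selected must be Pareto optimal in the new state. I would then try to move the remaining outcomes of $i$ above $x$ while keeping some $x$-like outcome selected, aiming for a state in which $x$ is $i$'s bottom. Here monotonicity alone pushes the wrong way, since it only lets $x$ rise. So this step should instead exploit Pareto: in a state where all $j\neq i$ share a common ranking with $y$ on top and $i$ ranks $y$ last, $F$ is contained in $\{y\}\cup\{\text{outcomes above } y \text{ for } i\}$. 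Step 2 applies the observation above with richly varied states, using the freedom of $\Thst$ to relabel outcomes. This should show that at most one agent, call her $d$, is never exposed anywhere, and that every other agent is exposed at every outcome she can be forced to bottom-rank. Step 3 then shows that $F(\theta)=\{\text{top}_d(\theta)\}$. For containment, I plan to combine the exposure of the other agents with Lemma~\ref{lem:cross} applied to a cross-profile in which $d$ targets any $x\in F(\theta)$ and the others target their exposed outcomes. For nonemptiness, the same kind of cross-profile, now combined with the Pareto property, should force $\text{top}_d(\theta)$ to be selected.

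I expect Step 1 to be the main obstacle, and in particular the passage from ``some selected outcome is not $i$'s top'' to ``$i$ is exposed''. Monotonicity only raises selected outcomes, whereas exposure requires a selected outcome at the bottom. My first attempt will be to exploit the fact that the Lemma can be applied with \emph{different} states for different agents, together with Pareto. Concretely, I would choose $\theta^k$ for $k\neq i$ so that all $k\neq i$ unanimously top-rank a single outcome; by Pareto this pins $F(\theta^k)$ down to that outcome, making the corresponding lower-contour sets $L_k$ equal to $X$, so that $k$'s constraint drops out. I would then vary $i$'s own state freely. This reduces the Lemma to statements about $i$ alone, and in the two-outcome case it already rules out majority-type rules, which are monotone and Pareto but not dictatorial. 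The case $|X|=2$ therefore has to be handled this way, because there Maskin monotonicity plus Pareto does not yield dictatorship. I would treat $|X|=2$ separately before the general case.
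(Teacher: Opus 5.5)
There is a genuine gap. Your argument draws on three properties: Maskin monotonicity, the bare cross-profile bound $\bigcap_i L_i(x^i,\theta^i)\neq\emptyset$ from Lemma~\ref{lem:cross}, and the Pareto property. These three do not imply dictatorship, so Steps~1 and~2 cannot be derived from them.

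To see this, take $n=2$ and $X=\{a,b,c\}$, and let $F(\theta)$ be the set of weakly Pareto optimal outcomes that are nobody's worst outcome at $\theta$. This rule has every property you use.
\begin{itemize}
\item \emph{Nonempty on $\Thst$.} Agent~$1$'s top qualifies unless it is agent~$2$'s bottom, and symmetrically for agent~$2$'s top. If each top is the other's bottom, the third outcome is both agents' middle and is undominated.
\item \emph{Maskin monotonic.} If $L_i(x,\theta)\subseteq L_i(x,\theta')$ for both $i$, then $x$ stays off both bottoms. Any $y$ dominating $x$ at $\theta'$ would already dominate it at $\theta$.
\item \emph{Satisfies the cross-profile bound.} Every selected outcome has lower-contour sets of size at least two, and two such subsets of a three-element set must meet.
\end{itemize}
Yet the rule is not dictatorial: when agent~$1$ ranks $a,b,c$ and agent~$2$ ranks $b,a,c$, it selects $\{a,b\}$. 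It also never selects anyone's worst outcome, so nobody is ever exposed. Your exposure observation is never triggered, and the conclusion of Step~1 is false for a rule satisfying everything you use.

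Your proposed route to Step~1 cannot repair this. If each $k\neq i$ gets a target that is her top, her constraint $L_k=X$ drops out. But then Lemma~\ref{lem:cross} reduces to $w(\sigma)\in L_i(x^i,\theta^i)$, which is always satisfiable, so it says nothing about agent~$i$ alone. Separately, Pareto pins $F(\theta^k)$ to the common top of $N\setminus\{i\}$ only if $i$ top-ranks it too.

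The missing idea is the stronger use of the cross-profile message in Claim~\ref{cl:nocommon}. Since $O^B_i(x^i,\theta^i)\subseteq L_i(x^i,\theta^i)$, the profile $m^\sigma$ is itself a Berge equilibrium at every state $\theta$ with $L_i(x^i,\theta^i)\subseteq L_i(w(\sigma),\theta)$ for all $i$. Hence $w(\sigma)\in F(\theta)$ at any such state. Now choose $\theta$ so that each $i$ ranks $\SU_i(x^i,\theta^i)$ above $w(\sigma)$ and everything else below it. The Pareto property then forces $\bigcap_i\SU_i(x^i,\theta^i)=\emptyset$, since any outcome in that intersection would unanimously dominate $w(\sigma)$ at $\theta$.

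This kills the rule above. Let agent~$1$ target $a$ at the state where agent~$1$ ranks $c,a,b$ and agent~$2$ ranks $a,b,c$. Let agent~$2$ target $a$ at the state where agent~$1$ ranks $a,b,c$ and agent~$2$ ranks $c,a,b$. The rule selects $\{a\}$ at both states, and $c$ is a common strict improvement.

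This closure property also makes exposures unnecessary. The paper takes one overruling pair per agent and picks $\gamma(i)\in\SU_i(x^i,\theta^i)$; by the claim, the $\gamma(i)$ cannot all coincide. It then contradicts Claim~\ref{cl:nocommon} at a state where everyone ranks the distinct blocking outcomes on top in a Condorcet cycle.

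Your Step~3 needs exactly this closure property too, especially the nonemptiness part where you want Pareto to force $\text{top}_d(\theta)$ into $F(\theta)$. The bare bound never places $w(\sigma)$ in $F$ at any state, so Pareto has nothing to act on.
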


\subsection{Proof of Theorem \ref{thm:dictator}}
\label{subsec:proof}

Suppose $F$ satisfies the Pareto property and is Berge-implemented by $\Gamma = (M, g)$.
Assume, for contradiction, that $F$ is \emph{not} dictatorial.

\medskip
\noindent\textbf{Step 0: Every agent is overruled somewhere.}

Since no agent is a dictator, for each $i \in N$ there is a state--outcome pair $(x^{i},
\theta^{i})$ that ``overrules'' $i$, that is, $\SU_{i}(x^{i}, \theta^{i}) \neq \emptyset$ and
$x^{i} \in F(\theta^{i})$. Collecting these pairs gives us a profile $\sigma =
(x^{i}, \theta^{i})_{i \in N} \in \Sigma^{F}$ in which every agent's target is strictly
improvable, that is, $\SU_{i}(x^{i}, \theta^{i}) \neq \emptyset$ for every $i$.

\medskip
\noindent\textbf{Step 1: No common improvement.}

The whole proof lies on the following strengthening of the cross-profile bound.

\begin{claim}\label{cl:nocommon}
Let $\sigma = (x^{i}, \theta^{i})_{i \in N} \in \Sigma^{F}$ with
$\SU_{i}(x^{i}, \theta^{i}) \neq \emptyset$ for every $i$. Then
\[
  \bigcap_{i \in N} \SU_{i}(x^{i}, \theta^{i}) = \emptyset.
\]
\end{claim}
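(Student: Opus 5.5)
The plan is a proof by contradiction. I will build a single strict state in which the cross-profile message $m^{\sigma}$ is a Berge equilibrium whose outcome is unanimously dominated; uniqueness and the Pareto property then rule this out.

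Suppose, to the contrary, that some $y \in \bigcap_{i\in N}\SU_i(x^i,\theta^i)$. Let $w = w(\sigma) = g(m^{\sigma})$ be the cross-profile outcome. I will use two facts about $O^B_i(x^i,\theta^i)$, the set of outcomes $i$'s opponents can induce once $i$ plays $m_i(x^i,\theta^i)$.

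First, because $m(x^i,\theta^i)$ is a Berge equilibrium at $\theta^i$, we have $O^B_i(x^i,\theta^i)\subseteq L_i(x^i,\theta^i)$. This is exactly the inclusion used in the proof of Lemma~\ref{lem:cross}. Since $y \in \SU_i(x^i,\theta^i)$, and $L_i$ and $\SU_i$ partition $X$, it follows that $y\notin O^B_i(x^i,\theta^i)$ for every $i$.

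Second, by Lemma~\ref{lem:cross}, $w\in L_i(x^i,\theta^i)$ for each $i$, whereas $y\in \SU_i(x^i,\theta^i)$. Hence $w\neq y$.

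Now I define a state $\theta^{\ast}\in\Thst$. Every agent $i$ holds a strict ordering that ranks $y$ first, $w$ second, and the remaining outcomes in any strict order below $w$. Such a state exists because $\Thst$ is the unrestricted domain of strict preferences and $w\neq y$. In this state, every element of $X\setminus\{y\}$ lies in $L_i(w,\theta^{\ast})$ for every $i$.

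I then check that $m^{\sigma}$ is a Berge equilibrium at $\theta^{\ast}$. The criterion noted after the cross-profile construction says this holds iff $O^B_i(x^i,\theta^i)\subseteq L_i(w,\theta^{\ast})$ for all $i$. This inclusion follows because each $O^B_i(x^i,\theta^i)$ avoids $y$, and everything other than $y$ lies weakly below $w$ at $\theta^{\ast}$.

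By the uniqueness part of Berge implementation, $w\in F(\theta^{\ast})$. By the Pareto property, $w\in\PO(\theta^{\ast})$. But $y\,P_i(\theta^{\ast})\,w$ for all $i$, so $w\notin \PO(\theta^{\ast})$, a contradiction. Hence the intersection in Claim~\ref{cl:nocommon} is empty.

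The only delicate step is the design of $\theta^{\ast}$. The ordering must make $m^{\sigma}$ an equilibrium, which requires every outcome the opponents can reach to sit weakly below $w$. At the same time, some outcome must strictly dominate $w$ for everyone. Placing $y$ first and $w$ second meets both requirements, and it works precisely because $y$ is unreachable from each $O^B_i$. The nonemptiness hypothesis on the sets $\SU_i(x^i,\theta^i)$ is not actually needed for this step; it only matters later, when the claim is applied to the profile from Step~0.
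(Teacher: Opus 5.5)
Your proof is correct and follows the paper's argument. You build a strict state in which $m^{\sigma}$ is still a Berge equilibrium with outcome $w(\sigma)$ while the common improvement unanimously dominates $w(\sigma)$, and then uniqueness and the Pareto property give the contradiction. The only difference is cosmetic. The paper places each agent's whole set $\SU_i(x^i,\theta^i)$ above $w$ and uses $L_i(x^i,\theta^i)\subseteq L_i(w,\theta)$, whereas you use one uniform ordering with $y$ first and $w$ second together with $y\notin O^B_i(x^i,\theta^i)$. Your remark that the nonemptiness hypothesis is not needed is also right.
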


In words, the agents' targets cannot all be improved upon by the same outcome. If they could,
there would be a single alternative strictly preferred by every agent to their own current
target; a kind of unanimous improvement that strategic altruism, combined with efficiency,
rules out.

\begin{proof}[Proof of Claim~\ref{cl:nocommon}]
Recall that $w = w(\sigma) = g(m^{\sigma})$ is the cross-profile outcome.

\smallskip\noindent\emph{(i) $w$ lies weakly below every target.} By
Lemma \ref{lem:cross} (the cross-profile bound),
$w \in \bigcap_{i} L_{i}(x^{i}, \theta^{i})$.

\smallskip\noindent\emph{(ii) Constructing an auxiliary state.} Since the domain is unrestricted,
choose $\theta \in \Thst$ so that for every $i \in N$,
\[
  \underbrace{\SU_{i}(x^{i}, \theta^{i})}_{\text{ranked on top}}
  P_{i}(\theta)\;
  w
  \;P_{i}(\theta)\;
  \underbrace{L_{i}(x^{i}, \theta^{i}) \setminus \{w\}}_{\text{ranked at the bottom}}.
\]
The three blocks partition $X$ (since $\SU_{i}(x^{i}, \theta^{i}) = X \setminus
L_{i}(x^{i}, \theta^{i})$ and $w \in L_{i}(x^{i}, \theta^{i})$ by (i)), so we may order them
however we like within each block. By construction, $L_{i}(x^{i}, \theta^{i}) \subseteq
L_{i}(w, \theta)$ for every $i$.

\smallskip\noindent\emph{(iii) The cross-profile message is a Berge equilibrium at $\theta$.}
Fix any agent $i$ and hold her message at $m_{i}(x^{i}, \theta^{i})$. The outcomes her
opponents can induce are $O^B_{i}(x^{i}, \theta^{i}) \subseteq L_{i}(x^{i}, \theta^{i})
\subseteq L_{i}(w, \theta)$, so none of them is strictly better for $i$ than $w$ at $\theta$.
Since $i$ was arbitrary, $m^{\sigma}$ is a Berge equilibrium at $\theta$ with outcome $w$.
By implementation, $w \in F(\theta)$.

\smallskip\noindent\emph{(iv) Contradiction via efficiency.} Suppose
$\bigcap_{i} \SU_{i}(x^{i}, \theta^{i}) \neq \emptyset$ and take any outcome $\bar{x}$ in
this intersection. By (ii), $\bar{x}$ sits in every agent's top block at $\theta$, so
$\bar{x} \, P_{i}(\theta) \, w$ for all $i$. The outcome $w$ is therefore unanimously
dominated and $w \notin \PO(\theta)$. But $w \in F(\theta) \subseteq \PO(\theta)$ by the
Pareto property, a contradiction. Hence, the intersection must be empty.
\end{proof}

\medskip
\noindent\textbf{Step 2: Blocking outcomes.}

Since $\SU_{i}(x^{i}, \theta^{i}) \neq \emptyset$ for every $i$, we may choose, for each $i
\in N$, an outcome
\[
  \gamma(i) \in \SU_{i}(x^{i}, \theta^{i})
\]
that $i$ strictly prefers to her own target. By Claim~\ref{cl:nocommon} these choices are
\emph{not} all equal: if $\gamma(i) = x^{\ast}$ for every $i$, then $x^{\ast}$ would belong
to $\bigcap_{i} \SU_{i}(x^{i}, \theta^{i})$, which is empty. Partition $N$ into cells
according to the value of $\gamma$, writing
\[
  N_{c} = \{ i \in N : \gamma(i) = c \}
\]
for the cell of agents whose blocking outcome is $c$, and let $K \subseteq N$ contain
exactly one representative per cell. Relabelling the representatives as
$1, \ldots, |K|$, the outcomes $\gamma(1), \ldots, \gamma(|K|)$ are the distinct
\emph{blocking outcomes}; write $G = \{\gamma(j) : j \in K\}$ with $|K| \geq 2$. All indices
on representatives below are understood modulo $|K|$.

\medskip
\noindent\textbf{Step 3: A Condorcet state and the final contradiction.}

Choose $\theta \in \Thst$ in which every agent ranks the blocking outcomes above everything
else:
\[
  G P_{i}(\theta)\; X \setminus G \quad \text{for all } i \in N.
\]
No outcome outside $G$ is Pareto optimal at $\theta$, so the Pareto property gives
\begin{equation}
  F(\theta) \subseteq G. \tag{$\ast$}\label{eq:inG}
\end{equation}

We now arrange preferences \emph{within} $G$ to derive a contradiction.

\smallskip\noindent\emph{The case $|K| = 3$ (illustrative).} Take three blocking outcomes
$\gamma(1) = a$, $\gamma(2) = b$, $\gamma(3) = c$, one agent per cell ($N = \{1, 2, 3\}$).
Arrange the rankings of $\{a, b, c\}$ at $\theta$ as a Condorcet cycle:
\begin{center}
\begin{tabular}{ccc}
$P_{1}(\theta)$ & $P_{2}(\theta)$ & $P_{3}(\theta)$\\
\midrule
$c$ & $a$ & $b$\\
$b$ & $c$ & $a$\\
$a$ & $b$ & $c$
\end{tabular}
\end{center}
Each agent's worst outcome in $G$ is her own blocking outcome ($a$ for agent $1$, $b$ for
agent $2$, $c$ for agent $3$), while her blocking outcome is the top choice of the next agent
cyclically.

By~\eqref{eq:inG} and nonemptiness, $F(\theta)$ contains some blocking outcome; say $a =
\gamma(1) \in F(\theta)$. Now consider $b = \gamma(2)$. By the cyclic arrangement:
\begin{itemize}
\item[(a)] every agent outside the cell $N_{\gamma(2)} = \{2\}$ strictly prefers $b$ to $a$
(agent $1$: $b P_1(\theta) a$; agent $3$: $b P_3(\theta) a$); and
\item[(b)] agent $2$'s target $x^{2}$ already lies strictly below $b$, since
$b = \gamma(2) \in \SU_{2}(x^{2}, \theta^{2})$ by construction.
\end{itemize}
Form the profile $\sigma' = \bigl((a, \theta), (x^{2}, \theta^{2}), (a, \theta)\bigr) \in
\Sigma^{F}$ (legitimate because $a \in F(\theta)$ and $x^{2} \in F(\theta^{2})$). Then $b$
lies in the strict upper-contour set of every agent in $\sigma'$:
\[
  b \in \SU_{1}(a, \theta), \quad b \in \SU_{2}(x^{2}, \theta^{2}), \quad b \in \SU_{3}(a, \theta).
\]
This means $b \in \bigcap_{j} \SU_{j}(\cdot)$ with every set nonempty, contradicting
Claim~\ref{cl:nocommon}. The cases where $b$ or $c$ is selected are symmetric.

\smallskip\noindent\emph{General $|K| \geq 2$.} Within $G$, assign each representative $i
\in K$ the cyclic ranking
\[
  \gamma(i-1) \; P_{i}(\theta)\; \gamma(i-2) \; P_{i}(\theta)\; \cdots \; P_{i}(\theta)\;
  \gamma(i+1) \; P_{i}(\theta)\; \gamma(i) \pmod{|K|}
\]
so that $\gamma(i)$ is representative $i$'s worst outcome in $G$; assign every agent in the
same cell as $i$ the same ranking. By~\eqref{eq:inG} and nonemptiness some $\gamma(i) \in
F(\theta)$. Consider $\gamma(i+1)$ (indices modulo $|K|$). By the cyclic arrangement:
every agent outside cell $N_{\gamma(i+1)}$ strictly prefers $\gamma(i+1)$ to $\gamma(i)$; for
every agent $j$ inside that cell, $\gamma(i+1) = \gamma(j) \in \SU_{j}(x^{j}, \theta^{j})$
by construction. Form the profile $\sigma' \in \Sigma^{F}$ that gives each agent outside
$N_{\gamma(i+1)}$ the pair $(\gamma(i), \theta)$ and each agent $j$ inside it the pair
$(x^{j}, \theta^{j})$. Then $\gamma(i+1)$ belongs to every agent's strict upper-contour set
in $\sigma'$, with every such set nonempty---contradicting Claim~\ref{cl:nocommon}.

\medskip
This contradiction exhausts all cases. Hence $F$ must be dictatorial, which completes the proof. \qed

\subsection{Discussion}
\label{subsec:discussion}

Theorem \ref{thm:dictator} and Examples~\ref{ex:wp}--\ref{ex:km} together paint a stark
picture. Under strategic altruism, the set of implementable and efficient social goals is
precisely the set of dictatorial rules. A few remarks are in order.

\begin{remark}[Comparison with Nash implementation]\label{rem:nash}
The result is surprising in light of the classical Nash impossibility in \citet{hurwicz1978construction}, which establishes that when $n = 2$,
any weakly Pareto-efficient and Nash-implementable SCR on the full domain of strict preferences must
be dictatorial. For three or more agents, however, this impossibility \emph{disappears} with Nash while, in our case, the impossibility persists for \emph{any} number of agents. Thus, what drives our impossibility is the behavioral assumption itself, not the size of the group.
\end{remark}

\begin{remark}[Domain]\label{rem:domain}
The proof uses only that $\Thst \subseteq \Theta$. Theorem \ref{thm:dictator} therefore holds on any domain that contains the unrestricted strict domain, such as a domain of weak orders that contains $\Thst$.
\end{remark}

\begin{remark}
The social dilemma literature celebrates strategic altruism because
it resolves the conflict between individual and collective rationality \emph{within} a given game \citep{COLMAN2011166, haller2024berge}. Our results show that it creates a new and equally severe
problem at the level of institutional design: it forecloses efficient,
non-dictatorial rules that self-interested societies can operate. Our pessimistic conclusion runs sharply against the intuitive appeal of altruism
as a foundation for institution design.
\end{remark}

\begin{remark}[Positive results under domain restrictions]\label{rem:domain_positive}
The impossibility results hold on the unrestricted domain, which places no constraints on
how agents rank the available outcomes. However, we can recover some optimism in restricted domains that arise naturally in economic
environments. Consider a pure one-to-one matching environment
in which staying single is the worst possible outcome for every agent. This condition
holds, for instance, in marriage markets where remaining unmatched is universally
unacceptable. On this domain, the set of stable matchings is Berge-implementable.\footnote{In this domain, \cite{tadenuma1998implementable} study the class of Nash-implementable matching rules.} The reason is transparent in terms of the cross-profile
bound: when staying single is everyone's least preferred option, every agent's lower-contour
set at any matched outcome is large enough, as it includes the single outcome. The cross-profile
outcome can therefore always be found inside the intersection of these sets. 
\end{remark}

\section{Berge Implementation Is Harder Than Nash}
\label{sec:berge-implies-nash}

The impossibility results of Section~\ref{sec:impossibility} establish that strategic
altruism is a restrictive behavioral protocol for institutional design. Many efficient and
non-dictatorial rules that a society of self-interested agents can operate lie beyond the reach
of a strategically altruistic society. A natural question that arises at this point is how severe this restriction
really is. Could there be social goals that strategically altruistic agents \emph{can}
implement but self-interested ones cannot? 
 
The answer is no. Berge implementability is strictly more demanding than Nash
implementability: every social goal achievable under strategic altruism is also achievable
under self-interest, but not vice versa. This is summarized in our proposition below:
 
\begin{proposition}[Berge implies Nash]\label{prop:BimpliesN}
If a SCR $F$ is Berge-implementable, then $F$ is Nash-implementable. Moreover, when
$n = 2$, the converse also holds: $F$ is Berge-implementable if and only if it is
Nash-implementable.
\end{proposition}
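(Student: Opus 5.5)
For $n=2$ I would prove both directions at once with a relabelling argument. For $n\geq 3$ I would verify the Moore--Repullo Condition~$\mu$, taking as ingredients the opponent-reachable sets $O_i^B(x,\theta)$ that Berge equilibria already supply.

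\emph{The two-agent case.} With $N=\{1,2\}$, the Berge condition for agent $1$ at a profile $(m_1,m_2)$ reads $g(m_1,m_2)\,R_1(\theta)\,g(m_1,m_2')$ for all $m_2'\in M_2$. This is exactly the Nash condition for an agent $1$ who controls the \emph{second} coordinate. So, given $\Gamma=(M,g)$, I define the swapped mechanism $\tilde\Gamma=(\tilde M,\tilde g)$ with $\tilde M_1=M_2$, $\tilde M_2=M_1$ and $\tilde g(\tilde m_1,\tilde m_2)=g(\tilde m_2,\tilde m_1)$. I then check directly from the two definitions that, for every $\theta$, $(m_1,m_2)\in BE(\Gamma,\theta)$ if and only if $(m_2,m_1)\in NE(\tilde\Gamma,\theta)$, and that the outcomes coincide. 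Hence the equilibrium outcome correspondences are identical, and $\Gamma$ Berge-implements $F$ iff $\tilde\Gamma$ Nash-implements $F$. The same swap works in the opposite direction, which gives the ``if and only if'' for $n=2$ and, in particular, the first claim when $n=2$.

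\emph{The case $n\geq 3$.} Suppose $\Gamma$ Berge-implements $F$. I would invoke the characterization of \citet{moore1990nash}: for $n\geq 3$, $F$ is Nash-implementable iff there exist a set $B\subseteq X$ and, for each $i$ and each $(x,\theta)$ with $x\in F(\theta)$, a set $C_i(x,\theta)\subseteq B$ with $x\in C_i(x,\theta)\subseteq L_i(x,\theta)$, satisfying two conditions:
\begin{enumerate}[(a)]
\item if $z\in C_i(x,\theta)\subseteq L_i(z,\theta')$ and $B\subseteq L_j(z,\theta')$ for all $j\neq i$, then $z\in F(\theta')$;
\item if $z\in B$ and $B\subseteq L_j(z,\theta')$ for all $j$, then $z\in F(\theta')$.
\end{enumerate}
The monotonicity-type requirement, namely that $C_i(x,\theta)\subseteq L_i(x,\theta')$ for all $i$ implies $x\in F(\theta')$, is the special case $z=x$ of (a) in some formulations, and I would verify it separately anyway.

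I would set $B=g(M)$ and $C_i(x,\theta)=O_i^B(x,\theta)$. The inclusion $x\in O_i^B(x,\theta)\subseteq L_i(x,\theta)\cap B$ is exactly the Berge property of $m(x,\theta)$, as used in Lemma~\ref{lem:cross}.

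\emph{Monotonicity.} If $O_i^B(x,\theta)\subseteq L_i(x,\theta')$ for every $i$, then $m(x,\theta)$ is itself a Berge equilibrium at $\theta'$, so $x\in F(\theta')$ by uniqueness.

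\emph{Condition (a).} Pick $m'_{-i}$ with $g(m_i(x,\theta),m'_{-i})=z$, and consider $\hat m=(m_i(x,\theta),m'_{-i})$.
\begin{itemize}
\item[] For agent $i$, every outcome her opponents can reach lies in $O_i^B(x,\theta)\subseteq L_i(z,\theta')$.
\item[] For each $j\neq i$, every reachable outcome lies in $g(M)=B\subseteq L_j(z,\theta')$.
\end{itemize}
So $\hat m\in BE(\Gamma,\theta')$, and $z\in F(\theta')$.

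\emph{Condition (b).} Take any $m$ with $g(m)=z$. Since $B\subseteq L_j(z,\theta')$ for all $j$, this $m$ is a Berge equilibrium at $\theta'$, so again $z\in F(\theta')$.

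The step I expect to be the main obstacle is not any of these verifications, which are one-liners. It is matching the exact formulation of Condition~$\mu$ in \citet{moore1990nash}: whether $B$ must contain the range, whether $C_i$ is indexed by $(x,\theta)$ or only by the announcement, and how the ``no-veto-like'' clause is phrased. I would first transcribe their statement precisely and then check that the choice $B=g(M)$, $C_i=O_i^B$ fits each clause. If some clause resists, the fallback is to build the canonical Moore--Repullo mechanism directly from the family $\{O_i^B(x,\theta)\}$ and re-run their sufficiency proof.

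Finally, to support the ``not vice versa'' reading in the text, I would point to Example~\ref{ex:km}: the king-maker rule is Nash-implementable for $n\geq 3$ \citep{maskin1999nash} but not Berge-implementable, so the inclusion is strict for $n\geq 3$.
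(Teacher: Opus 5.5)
Your proposal is correct and matches the paper's proof. The $n=2$ case is the same swapped-mechanism argument. For $n\geq 3$, taking $B=g(M)$ and $C_i=O_i^B$ and checking ($\mu$i)--($\mu$iii) directly is exactly what the paper does; it just routes those same one-line checks through its necessity lemma for parts (ii)--(iv) of Condition $\beta^\varepsilon$.

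Two small points. First, ($\mu$i) is not the $z=x$ case of your clause (a); it actually implies that case. This is harmless, since you verify ($\mu$i) separately. Second, your worry about the formulation resolves cleanly: $B=g(M)\supseteq\bigcup_\theta F(\theta)$ holds by the existence requirement, so your choice fits Moore--Repullo's Condition $\mu$ as the appendix states it.
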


\subsection{Why Berge Is Harder: The Identification Problem}
\label{subsec:identification}
 
To see intuitively why Berge implementation is the more demanding standard, it is useful to
understand where the extra difficulty lies. As explained in Section \ref{sec:model}, full implementation has two requirements: (i)
\emph{existence}, which dictates that every socially optimal outcome must be supported by some equilibrium;
and (ii) \emph{uniqueness}, which dictates that every equilibrium must produce a socially optimal outcome.
 
In Nash implementation with three or more agents, existence comes essentially for free. The
following simple mechanism works: each agent $i$ is asked to report an outcome-state pair $(x^i,\theta^i)$ such that $x^i$ is socially optimal at state $\theta^i$ and, if at least $n-1$ agents make the same announcement, the announced outcome is implemented. Note that unanimous announcement of $(x,\theta)$ at state $\theta$ is a Nash equilibrium since no single agent can profitably deviate (the outcome function is insensitive to a single deviation). This logic is behind the classical results in Nash implementation: if a
single agent deviates from unanimous agreement, the planner can identify her as the
``deviator'' by tailoring the outcome function accordingly.

In Berge implementation, this logic however breaks down. The problem is that Berge
equilibrium requires each agent to fix her \emph{own} message while her opponents optimize
for her. When checking whether a cross-profile message is a Berge equilibrium, each player is potentially playing a message from a \emph{different} scenario, and
there is no way to identify who is the ``deviator''. To see this, suppose that $m_1=m_2=(x,\theta)$, while $m_3=(y,\theta')$. While in the case of Nash players the planner can potentially identify 3 as the deviator from the unanimous announcement $(x,\theta)$, with Berge players, she cannot: it could be that agents 1 and 2 deviated
from $(y, \theta')$, or it could be that agents 2 and 3 deviated from $(x, \theta)$ and so on. Thus, with Berge players, the planner cannot tailor the outcome function to punish
the right agents, because she cannot tell which ones they are. This identification failure is
precisely what makes the cross-profile bound of Lemma \ref{lem:cross} binding, and it is why
Berge implementability requires a strictly stronger condition on $F$ than Nash
implementability does.
 
\subsection{Outline of the Proof}
\label{subsec:proof-BimpliesN}
 
The argument proceeds via the characterizing conditions for each implementation notion; the
formal statements and complete proofs can be found in Appendix \ref{app:proof-prop}. Here
we describe the logic in plain terms.

For our characterization, we will be using two important notions. One is the set $O^B_i(x^i,\theta^i)$, as defined in Section \ref{sec:model}, which represents the opportunity set of $i$'s opponents while $i$ plays her part $m_i(x^i,\theta^i)$ of a Berge equilibrium according to $\sigma$. The second is an analogue construction from the Nash viewpoint: $O^N_i(x,\theta)$ is $i$'s opportunity set when her opponents are playing their part $m_{-i}(x,\theta)$ of a Nash equilibrium $m(x,\theta)$ that supports $x$ in state $\theta$.\footnote{Note that, in the case of Nash, one only needs to define $O_i^N$ for a given Nash equilibrium $(x,\theta)$ and not for an arbitrary mix of equilibrium messages $m^{\sigma}$ as we have done in the case of Berge. This is due to the importance of cross-profile messages in the latter case. Moreover, note that these two constructions are in general different.} Formally, given a Nash equilibrium $m(x,\theta)$ such that $x = g(m(x,\theta))$,
\begin{center}

$O^N_i(x,\theta) = \bigl\{g(m_i', m_{-i}(x,\theta)) : m_i' \in M_i\bigr\}$
  
\end{center}

\paragraph{What characterizes Nash implementability.}
\citet{moore1990nash} show that, with three or more agents, $F$ is Nash-implementable if and
only if it satisfies their Condition $\mu$, which is stated in terms of the existence of a family of
sets $\CN_i$. Informally: there is a range $B$ of attainable outcomes, and for each $x \in
F(\theta)$ a family of sets $\CN_i(x,\theta) \subseteq B$ with $x$ top-ranked by every agent
within her own set at state $\theta$, such that for every alternative state $\theta'$ the rule respects three
closure requirements. First, if $x$ is each agent's most-preferred outcome within her own set
at state $\theta'$, then $x \in F(\theta')$. Second, if some outcome $c$ is agent $i$'s
most-preferred within her set and simultaneously every \emph{other} agent's most-preferred
within the whole range $B$ at $\theta'$, then $c \in F(\theta')$. Third, if an outcome $d$ is
top-ranked by \emph{all} agents within the whole range $B$ at $\theta'$, then it too must be
selected, that is, $d\in F(\theta')$. Now, it is easy to check that, from the necessity point of view, whenever a mechanism Nash-implements $F$, the opportunity sets $O^N_i$ satisfy the requirements for the $\CN_i$ sets.

\paragraph{What characterizes Berge implementability.}
We establish (Theorem \ref{thm:charBerge} in the appendix) that $F$ is Berge-implementable
if and only if it satisfies Condition $\beta^\varepsilon$; only the ``only if'' half of that
equivalence, isolated as Lemma \ref{lem:necessity} in the Appendix, is needed for the
present proposition. This condition includes the same
three-part structure as Condition $\mu$, that is, it relies on the existence of a family of sets $\CB_i$ that satisfy the above three requirements. However, it carries a \emph{fourth}
part with no counterpart in Condition $\mu$ (for the case of three or more agents). For \emph{any} number of agents $n$, and for any
profile of ``announcements'' $\sigma=(x^i,\theta^i)_{i \in N}$ with $x^i \in F(\theta^i)$, the family of Berge sets must have a non-empty common intersection,
\[
  \bigcap_{i \in N} \CB_i(x^i, \theta^i) \neq \emptyset,
\]
together with an associated (monotonicity-type) closure requirement on any outcome in that intersection. Specifically, if an outcome on this intersection weakly improves on everyone's preference when switching to a new state, then it should be selected as socially optimal at that state. This is
exactly the cross-profile bound of Lemma \ref{lem:cross} paired with a monotonicity-type requirement, re-expressed as a condition on
$F$ itself. In Condition $\mu$ this requirement appears only in the
two-agent characterization; under Berge behavior it binds for every group size. Finally, note that, in the Berge case as well, given a mechanism that Berge-implements a SCR $F$, the opportunity sets $O^B_i$ serve as the $\CB_i$ sets.

\paragraph{The logical relationship and the proof.}
The link between the two conditions runs through a single fact, established in the Appendix:
\emph{whenever $F$ is Berge-implementable, the implementing mechanism's Berge opportunity sets
$O^B_i$ can serve as the Nash opportunity sets $O^N_i$.} Intuitively, the structure that Berge equilibrium imposes is strong enough to supply exactly the closure
properties that Condition $\mu$ demands of a Nash mechanism and, additionally, carries an even stricter requirement, that of the cross-profile bound. In other words, since Condition $\beta^\varepsilon$ satisfies all three parts of Condition $\mu$ (and adds the fourth, cross-profile,
requirement on top), it is a strict strengthening of Condition $\mu$. The implication
therefore flows directly:
\[
  \begin{aligned}
    F \text{ is Berge-implementable}
      &\;\xRightarrow{\;\text{Lemma~\ref{lem:necessity}}\;}\;
      F \text{ satisfies  Condition }\beta^\varepsilon\\
      &\;\Rightarrow\;
      F \text{ satisfies Condition }\mu\\
      &\;\xRightarrow{\;\text{\cite{moore1990nash}}\;}\;
      F \text{ is Nash-implementable}.
  \end{aligned}
\]
 
\emph{Two-agent equivalence.} It is noteworthy that when $n = 2$, the gap closes from the other side. With only
two agents, agent $i$'s Berge opportunity set $O^B_i$ coincides, after relabeling the two players, with
$j$'s Nash opportunity set $O^N_j$ in a suitably permuted mechanism. The cross-profile
intersection requirement of Condition $\beta^\varepsilon$ then matches the additional
two-agent condition that \citet{moore1990nash} require for Nash implementation
(part~($\mu$iv) of their Condition $\mu_2$ or, equivalently, Condition $\beta$ of
\citealt{dutta1991necessary}). The two characterizations become equivalent,
so Nash implementability implies Berge implementability. The full formal
argument, including the explicit permutation of message spaces that constructs the Berge
mechanism from any Nash mechanism for the case of two agents, is in Appendix~\ref{app:proof-prop}.
 
\subsection{The Altruism Tax}
\label{subsec:altruism-tax}
 
Proposition \ref{prop:BimpliesN} has a stark economic interpretation. Think of the
set of Berge-implementable SCRs as the collection of social goals that a strategically
altruistic society can achieve through institutional design, and the set of
Nash-implementable SCRs as the collection achievable by a self-interested society. The
proposition says the former is a strict subset of the latter and illustrates the idea of the \emph{altruism tax}: paradoxically, moving from egoistic to altruistic decision-making restricts what a social planner can achieve.



This bears on an old verdict. \citet{shubik1961review}, with his negative review, dismissed Berge's
monograph as offering nothing to economists. However, the subsequent
literature in Economics has restored the relevance of Berge equilibrium with applications in Cournot and Bertrand models, as well as in social dilemmas
\citep{COLMAN2011166, haller2024berge} and, on that terrain, its importance is
decisive. Our results though identify a different terrain,
on which Shubik's verdict survives in a far sharper form: as a behavioral premise for institutional design, strategic
altruism is not merely uninteresting, but strictly dominated.

Moreover, as shown in Sections~\ref{subsec:pareto-impossible}
and~\ref{subsec:kingmaker}, the gap between the two sets is populated by economically important rules. The weak Pareto
rule and the king-maker rule are Nash-implementable for $n \geq 3$ \citep{maskin1999nash} yet,
as we have shown, not Berge-implementable. The dictatorship theorem of
Section~\ref{subsec:dictatorship} establishes that the only Berge-implementable efficient
rules are dictatorial, while Nash implementation is consistent with a rich variety of
efficient, non-dictatorial rules for any group of three or more agents. The altruism tax is
therefore not a technical curiosity confined to exotic examples; it has a bite in cases of interest to economists, political theorists and social scientists in general.
 
The finding is surprising since, a priori, there was no need to expect such nestedness in the implementation notions, especially given that there is no such nestedness in the equilibrium concepts.\footnote{Note that nestedness of equilibrium notions does not in general imply nestedness of the corresponding implementation notions. Both subgame-perfect and strong Nash equilibria are subsets of the Nash equilibria, yet Nash implementability implies subgame-perfect implementability (a simultaneous-move game form has no proper subgames, and thus its subgame-perfect and Nash equilibria coincide \citep{moore1988subgame}), whereas it does not imply strong Nash implementability, since the set of strong Nash equilibria may be empty or a strict subset of the Nash set \citep{dutta1991implementation}.} The result also stands in sharp contrast to popular narratives that equate altruism with social
progress. Indeed, our results show the exact opposite: from a mechanism-design perspective, the strategic
altruism encapsulated in Berge behavior has no institutional value whatsoever. A social
planner can always do at least as well, and typically strictly better, when agents
pursue their own interests, because self-interest leaves the planner with more room to
steer agents' incentives towards the desirable goal.

\section{Conclusions}
\label{sec:conclusion}
 
Our results establish three interconnected findings about the
institutional consequences of strategic altruism, formalized as
Berge equilibrium. First, no mechanism can implement the weak
Pareto rule in
Berge equilibrium.
Second, any social choice rule that is both weakly Pareto efficient
and Berge-implementable on the full domain of strict orderings, must be dictatorial that is, it unconditionally imposes
one agent's preferred outcome on all of society. Third, Berge implementability is
strictly more demanding than Nash implementability: every social goal that can be achieved by institutional design in a society of strategic
altruists can also be
achieved in a society of self-interested agents, but the converse
is false. The source of all three results is the same structural
feature of Berge equilibrium, the cross-profile bound, which forces
any implementing mechanism to simultaneously honor every agent's
equilibrium protection across all possible states of the world.
Efficiency then concentrates these obligations into the preferences
of a single agent and dictatorship emerges. 

Our results suggest that \citet{shubik1961review}'s verdict was right all along for a reason he did not give: it is not that strategic altruism lacks economic content, but that \textit{its content is confined to the analysis of given games, and does not extend to the design of the games themselves}. We conclude the paper with a promising follow-up research agenda:

\paragraph{In search of a tax-free behavioral norm.}
Our results establish that strategic altruism carries an
institutional cost that self-interest does not. This raises an important question: does there exist a
behavioral norm that simultaneously resolves social dilemmas
\emph{and} avoids the altruism tax?
 
The answer requires a possible move away from \emph{other-regarding}
cooperation (each agent devoting her strategic effort to
advancing others' payoffs) towards \emph{self-regarding}
cooperation in the sense of \citet{UNVEREN20231}. The
structural root of the altruism tax is the Berge counterfactual:
each agent holds her own message fixed while her opponents vary,
which is essentially the Nash counterfactual applied to others'
payoffs. It is this shared unilateral structure that generates
the cross-profile bound and, with it, the dictatorship result.
\emph{Kantian equilibrium} \citep{roemer2010kantian}, by
contrast, replaces the unilateral counterfactual with a
universal one: each agent asks what would happen if everyone
varied their strategy proportionally. No agent holds her own
message fixed; instead, deviations are collective thought
experiments. This different counterfactual is what
allows Kantian equilibrium to resolve the Prisoner's Dilemma and
related social dilemmas. Moreover, as
\citet{roemer2010kantian} has already shown, Kantian
equilibrium can implement efficient, non-dictatorial allocation
rules in production economies, a positive result that stands in
sharp contrast to ours. Characterizing the
full set of Kantian-implementable social choice rules on the
unrestricted domain with not necessarily symmetric game forms, and determining whether Kantian
implementation escapes the altruism tax altogether, is the most
direct open problem that our results bring into focus.
 
A similar direction concerns \emph{team reasoning}
\citep{sugden2011mutual, sugden2015team}, in which agents do not
optimize individually at all but instead ask ``what should we
collectively do?'' and implement their component of the
collectively optimal action. Team reasoning also resolves social
dilemmas, and its relationship to mechanism design remains
unexplored. With our characterization theorem, we identify which cross-profile
property distinguishes Berge implementation from Nash implementation; developing the
analogous characterization for team-reasoning-based
implementation would clarify whether the altruism tax is specific
to the Berge counterfactual or is a feature of other
behavioral norms.

\bibliographystyle{apalike}
\bibliography{berge}

\appendix

\section{Characterization of Berge Implementation and Proof of Proposition~\ref{prop:BimpliesN}}
\label{app:proof-prop}
 
This appendix states Condition $\mu$ of \citet{moore1990nash} and Condition $\beta^\varepsilon$
formally; establishes that Condition $\beta^\varepsilon$ is necessary for Berge implementation
(Lemma~\ref{lem:necessity}) and, together with sufficiency, that it characterizes it
(Theorem~\ref{thm:charBerge}); and completes the proof of
Proposition~\ref{prop:BimpliesN}, which invokes only the necessity Lemma. To present the conditions more compactly, we need a piece of additional notation. For any set
$S \subseteq X$, agent $i$, and state $\theta$, let
\[
  MA_i(S, \theta) \equiv \{ x \in S : x \, R_i(\theta) \, y \text{ for all } y \in S \}
\]
denote the set of maximal outcomes in $S$, i.e.
her \emph{most-preferred} outcomes within $S$.

\subsection*{Condition $\mu$ \citep{moore1990nash}}
 
For completeness, we review the characterization of Nash implementation against which our
result is stated. \citet{moore1990nash} show that, for $n \geq 2$, $F$ is Nash-implementable
if and only if it satisfies the following condition:
 
\begin{adefinition}[Condition $\mu$]\label{def:mu}
A SCR $F$ satisfies \emph{Condition $\mu$} if there exist a set $B\supseteq \bigcup_{\theta}F(\theta)$ and, for each
$i \in N$, $\theta \in \Theta$, and $x \in F(\theta)$, a set $\CN_i(x,\theta) \subseteq B$
with $x \in MA_i(\CN_i(x,\theta), \theta)$, such that for every $\theta' \in \Theta$:
\begin{enumerate}
  \item[\upshape($\mu$i)] if $x \in \bigcap_{i \in N} MA_i(\CN_i(x,\theta), \theta')$, then
    $x \in F(\theta')$;
  \item[\upshape($\mu$ii)] if $c \in MA_i(\CN_i(x,\theta), \theta') \cap \bigl[\bigcap_{j \neq i}
    MA_j(B, \theta')\bigr]$ for some $i \in N$, then $c \in F(\theta')$;
  \item[\upshape($\mu$iii)] if $d \in \bigcap_{i \in N} MA_i(B, \theta')$, then
    $d \in F(\theta')$.
\end{enumerate}
\end{adefinition}
 
Part ($\mu$i) is a strengthening of Maskin monotonicity, part ($\mu$ii) is a weakening of no-veto power and ($\mu$iii) a weakening of unanimity. For two agents, the relevant condition is Condition $\mu_2$, which augments
($\mu$i)--($\mu$iii) with a fourth, intersection-type requirement:

\begin{enumerate}
  \item[\upshape($\mu$iv)] for any $\sigma=\bigl((x^1,\theta^1),(x^2,\theta^2)\bigr)
    \in \Sigma^F$:
  \begin{itemize}
    \item there exists $w(\sigma) \in \bigcap_{i \in N} \CN_i(x^i,\theta^i)$; and
    \item for all $\bar{\theta} \in \Theta$, if $w(\sigma) \in \bigcap_{i \in N}
      MA_i\bigl(\CN_i(x^i,\theta^i), \bar{\theta}\bigr)$, then
      $w(\sigma) \in F(\bar{\theta})$.
  \end{itemize}
\end{enumerate}
 
\subsection*{Condition $\beta^\varepsilon$}

Below we state the condition that characterizes Berge implementation:

\begin{adefinition}[Condition $\beta^\varepsilon$]\label{def:betaeps}
A SCR $F: \Theta \rightrightarrows X$ satisfies \emph{Condition $\beta^\varepsilon$} if
there exist a set $Y \supseteq \bigcup_\theta F(\theta)$; for every $i \in N$ and
every $(x, \theta)$ with $x \in F(\theta)$, a set $\CB_i(x,\theta) \subseteq Y$
with $x \in \CB_i(x,\theta) \subseteq L_i(x,\theta)$; and a selection
$w \colon \Sigma^F \to X$; such that the following hold.
 
\begin{enumerate}
  \item[\upshape(i)] \emph{Cross-profile selection.} For every
    $\sigma = (x^i, \theta^i)_{i \in N} \in \Sigma^F$,
    \[
      w(\sigma)\in \bigcap_{i \in N} \CB_i(x^i, \theta^i),
    \]
    with $w(\sigma) = x$ whenever $(x^i,\theta^i) = (x,\theta)$ for every $i \in N$;
    and, for every $\theta' \in \Theta$, if
    $\CB_i(x^i, \theta^i) \subseteq L_i\bigl(w(\sigma), \theta'\bigr)$ for all $i \in N$,
    then $w(\sigma) \in F(\theta')$.
 
  \item[\upshape(ii)] \emph{(Stronger version of) Monotonicity.} For any $(x,\theta)$ with
    $x \in F(\theta)$ and any $\theta' \in \Theta$: if $\CB_i(x,\theta) \subseteq L_i(x,\theta')$
    for all $i \in N$, then $x \in F(\theta')$.
 
  \item[\upshape(iii)] \emph{(Weaker version of) no-veto power.} For any $(x,\theta)$ with
    $x \in F(\theta)$ and any $\theta' \in \Theta$: if there exists $i \in N$ and
    $y \in Y$ such that $y \in \CB_i(x,\theta)$, $\CB_i(x,\theta) \subseteq L_i(y,\theta')$,
    and $Y \subseteq L_j(y,\theta')$ for all $j \neq i$, then $y \in F(\theta')$.
 
  \item[\upshape(iv)] \emph{(Weaker version of) unanimity.} For any $\theta' \in \Theta$ and any $x \in Y$:
    if $Y \subseteq L_i(x, \theta')$ for all $i \in N$, then $x \in F(\theta')$.
\end{enumerate}
\end{adefinition}
 
\begin{aremark}
Parts (ii)--(iv) of Condition $\beta^\varepsilon$ are the structural counterparts of
parts~($\mu$i)--($\mu$iii) of Condition $\mu$ (Definition~\ref{def:mu}), obtained by
replacing the Nash sets $\CN_i$ with the Berge sets $\CB_i$:
part~(ii) corresponds to ($\mu$i), part~(iii) to ($\mu$ii), and part~(iv) to ($\mu$iii).
The novel ingredient is part~(i), the cross-profile selection requirement,
which has no counterpart in Condition $\mu$ for $n \geq 3$. For two agents, parts (i)--(iv)
together reduce to Condition $\mu_2$ of \citet{moore1990nash} (equivalently, Condition $\beta$
of \citealt{dutta1991necessary}), whose fourth part is exactly the two-agent intersection
requirement that part~(i) generalizes to an arbitrary number of agents.
\end{aremark}
 
\subsection*{Necessity of Condition $\beta^\varepsilon$}
 
We first isolate the direction of the characterization on which
Proposition~\ref{prop:BimpliesN} rests: any mechanism that Berge-implements $F$ generates,
through its Berge opportunity sets, a family of sets witnessing Condition
$\beta^\varepsilon$.
 
\begin{alemma}[Necessity of Condition $\beta^\varepsilon$]\label{lem:necessity}
If a SCR $F$ is Berge-implementable, then $F$ satisfies Condition $\beta^\varepsilon$.
\end{alemma}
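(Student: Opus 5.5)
The plan is to read all the objects of Condition $\beta^\varepsilon$ directly off an implementing mechanism, exactly as in the cross-profile construction of Section~\ref{sec:model}. Let $\Gamma=(M,g)$ Berge-implement $F$. For every pair $(x,\theta)$ with $x\in F(\theta)$, existence gives a Berge equilibrium at $\theta$ with outcome $x$; fix one such equilibrium once and for all and call it $m(x,\theta)$. I then set
\[
Y=g(M),\qquad \CB_i(x,\theta)=O^B_i(x,\theta),\qquad w(\sigma)=g(m^\sigma)\ \text{ for } \sigma\in\Sigma^F .
\]
Three preliminary facts follow at once.
\begin{itemize}
\item $Y\supseteq\bigcup_\theta F(\theta)$, because every selected outcome is an equilibrium outcome, and every $O^B_i$ lies in $Y$ by definition.
\item $x\in O^B_i(x,\theta)$, because $i$'s opponents may simply play $m_{-i}(x,\theta)$.
\item $O^B_i(x,\theta)\subseteq L_i(x,\theta)$. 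This is precisely the Berge equilibrium condition for $m(x,\theta)$ at $\theta$.
\end{itemize}

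\textbf{Part (i).} Membership $w(\sigma)\in\bigcap_i O^B_i(x^i,\theta^i)$ holds because in $m^\sigma$ each agent $i$ plays $m_i(x^i,\theta^i)$. This is the content of the proof of Lemma~\ref{lem:cross}. If $\sigma$ is unanimous at $(x,\theta)$, then $m^\sigma=m(x,\theta)$, so $w(\sigma)=x$. For the closure clause, suppose $O^B_i(x^i,\theta^i)\subseteq L_i(w(\sigma),\theta')$ for every $i$. Then, holding $i$'s message fixed, no choice by her opponents yields an outcome she strictly prefers to $w(\sigma)$ at $\theta'$. Hence $m^\sigma$ is a Berge equilibrium at $\theta'$, and uniqueness gives $w(\sigma)\in F(\theta')$. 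This is the argument already used in step (iii) of the proof of Claim~\ref{cl:nocommon}.

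\textbf{Part (ii).} This is the special case of the closure clause in (i) with the unanimous profile $\sigma=((x,\theta),\dots,(x,\theta))$, for which $w(\sigma)=x$.

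\textbf{Part (iii).} Given $y\in O^B_i(x,\theta)$, choose $m'_{-i}$ with $g(m_i(x,\theta),m'_{-i})=y$, and consider the profile $\hat m=(m_i(x,\theta),m'_{-i})$. I check that $\hat m$ is a Berge equilibrium at $\theta'$ agent by agent.
\begin{itemize}
\item For agent $i$, her message is still $m_i(x,\theta)$, so the outcomes her opponents can reach form exactly $O^B_i(x,\theta)$. By hypothesis this set is contained in $L_i(y,\theta')$.
\item For each $j\neq i$, every reachable outcome lies in $Y$. By hypothesis $Y\subseteq L_j(y,\theta')$, so nothing reachable is strictly better for $j$ than $y$.
\end{itemize}
Hence $\hat m$ is a Berge equilibrium at $\theta'$ with outcome $y$, and uniqueness gives $y\in F(\theta')$.

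\textbf{Part (iv).} Take any $m$ with $g(m)=x$, which exists since $x\in Y=g(M)$. Every outcome reachable by any group of opponents lies in $Y$, and $Y\subseteq L_i(x,\theta')$ for every $i$. So $m$ is a Berge equilibrium at $\theta'$, and uniqueness gives $x\in F(\theta')$.

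I do not expect a genuine obstacle; each step is a direct check. The one point needing care is consistency: the equilibria $m(x,\theta)$ must be fixed once, before defining both the sets $\CB_i$ and the selection $w$. This guarantees that the same opportunity sets appear in every part, and that $w$ is well defined and equal to $x$ on unanimous profiles.
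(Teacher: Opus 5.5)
Your proposal is correct and follows the paper's proof essentially step by step. You make the same choices $Y=g(M)$, $\CB_i(x,\theta)=O^B_i(x,\theta)$ and $w(\sigma)=g(m^\sigma)$, and you run the same Berge-equilibrium check plus the uniqueness half of implementation for each part. The differences are cosmetic: you get part (ii) as the unanimous-profile case of the closure clause in (i) rather than re-verifying it directly, and you state explicitly $Y\supseteq\bigcup_\theta F(\theta)$ and the need to fix each $m(x,\theta)$ once, which the paper leaves implicit.
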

 
\begin{proof}
Suppose $F$ is Berge-implemented by $\Gamma = (M,g)$.
Set $Y = g(M)$ and, for any $i \in N$, $\theta \in \Theta$ and $x \in F(\theta)$, let
$\CB_i(x,\theta) = O^B_i(x,\theta) \subseteq Y$. By definition, $x \in \CB_i(x,\theta)$.
Suppose, for the sake of contradiction, that $\CB_i(x,\theta) \nsubseteq L_i(x,\theta)$.
Then there exists $m'_{-i} \in M_{-i}$ such that $y \equiv g\bigl(m_i(x,\theta),m'_{-i}\bigr)
\notin L_i(x,\theta)$, that is, $y \, P_i(\theta) \, x$. But then $i$'s opponents can make
her strictly better off than $x$, so $m(x,\theta) \notin BE(\Gamma,\theta)$---a
contradiction. Hence $\CB_i(x,\theta) \subseteq L_i(x,\theta)$. Define the selection
$w \colon \Sigma^F \to X$ by $w(\sigma) \equiv g(m^{\sigma})$.
 
\emph{$\beta^{\varepsilon}$(i).} Fix $\sigma \in \Sigma^F$ and form the cross-profile message
$m^\sigma = (m_i(x^i,\theta^i))_{i \in N}$. By definition,
$g(m^\sigma) \in \CB_i(x^i,\theta^i)$ for every $i$, so
$w(\sigma) \in \bigcap_i \CB_i(x^i,\theta^i)$. When all pairs coincide at $(x,\theta)$,
$m^\sigma = m(x,\theta)$ is a Berge equilibrium that supports outcome $x$, so $w = x$. Now suppose
$\CB_i(x^i,\theta^i) \subseteq L_i(w(\sigma),\theta')$ for all $i$. Then for any agent $i$, holding
her message at $m_i(x^i,\theta^i)$, her opponents can only attain outcomes in
$\CB_i(x^i,\theta^i) \subseteq L_i(w(\sigma),\theta')$, so $m^\sigma$ is a Berge equilibrium at
$\theta'$ with $g(m^\sigma) = w(\sigma) \in F(\theta')$.
 
\emph{$\beta^{\varepsilon}$(ii).} Fix $(x,\theta)$ with $x \in F(\theta)$ and suppose
$\CB_i(x,\theta) \subseteq L_i(x,\theta')$ for all $i$. Then $m(x,\theta)$ remains a Berge
equilibrium at $\theta'$, so $x \in F(\theta')$.
 
\emph{$\beta^{\varepsilon}$(iii).} Fix $(x,\theta)$ with $x \in F(\theta)$ and suppose $y \in
\CB_i(x,\theta)$, $\CB_i(x,\theta) \subseteq L_i(y,\theta')$, and $Y \subseteq L_j(y,\theta')$
for all $j \neq i$. Since $y \in \CB_i(x,\theta)$, there exists $m_{-i}'$ such that
$g(m_i(x,\theta), m_{-i}') = y$. Now, for agent $i$, all outcomes reachable by her opponents are
in $\CB_i(x,\theta) \subseteq L_i(y,\theta')$. Moreover, for any $j \neq i$, all outcomes in $Y$,
including all reachable outcomes by her opponents, are in $L_j(y,\theta')$. Hence $(m_i(x,\theta), m_{-i}')$
is a Berge equilibrium at $\theta'$, and $y \in F(\theta')$.
 
\emph{$\beta^{\varepsilon}$(iv).} Fix $x \in Y$ and suppose $Y \subseteq L_i(x,\theta')$ for all $i$.
Since $x \in Y = g(M)$, there exists $m$ with $g(m) = x$. For every agent $i$ and every
$m_{-i}'$, $g(m_i, m_{-i}') \in Y \subseteq L_i(x,\theta')$, so $m$ is a Berge equilibrium
at $\theta'$, and $x \in F(\theta')$. Hence $F$ satisfies Condition $\beta^\varepsilon$.
\end{proof}
 
\subsection*{Characterization theorem}
 
Condition $\beta^\varepsilon$ is not only necessary but also sufficient, so that it
characterizes Berge implementation exactly as Condition $\mu$ characterizes Nash
implementation.
 
\begin{atheorem}[Characterization of Berge implementation]\label{thm:charBerge}
A SCR $F$ is Berge-implementable if and only if it satisfies Condition $\beta^\varepsilon$.
\end{atheorem}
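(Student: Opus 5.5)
Necessity is Lemma~\ref{lem:necessity}, so only sufficiency needs proof. Assume $F$ satisfies Condition $\beta^\varepsilon$ with witnesses $Y$, $(\CB_i)$ and $w$, and build a canonical mechanism in the spirit of the Maskin and Moore--Repullo constructions, adapted so that the opportunity sets of the \emph{opponents} of each agent are governed by $\CB_i$. Each agent $i$ sends $m_i=(x^i,\theta^i,y^i,k^i)$, where $x^i\in F(\theta^i)$, $y^i\in Y$ and $k^i\in\mathbb{N}$. The outcome function has three regimes.

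\emph{Rule 1.} If all agents announce the same pair $(x,\theta)$, the outcome is $x$.

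\emph{Rule 2.} If exactly one agent $i$ announces a pair different from the common pair $(x,\theta)$ of the others, the outcome is $y^i$ when $y^i\in\CB_i(x,\theta)$ and $x$ otherwise. The dissenter $i$ can reach only $\CB_i(x,\theta)$.

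\emph{Rule 3.} In all other cases, an integer game decides the outcome: the agent with the highest $k^j$ (ties broken by lowest index) obtains $y^j$.

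The Berge logic requires some adjustment. Fixing agent $i$'s message, her opponents must reach only $\CB_i(x^i,\theta^i)$. So I would redesign Rule 3 to yield $w(\sigma)$, where $\sigma=(x^j,\theta^j)_j$ is read off the announcements. The opponents of $i$ can then produce $w(\sigma')$ only for profiles $\sigma'$ in which $i$'s pair is fixed, and $w(\sigma')\in\CB_i(x^i,\theta^i)$ by part~(i). Rule 2 should then be rewritten symmetrically: when all but $i$ agree on $(x,\theta)$, the outcome is controlled by the \emph{others}. In general, fixing $i$'s announcement, the set of reachable outcomes must be exactly $\CB_i(x^i,\theta^i)$. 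This suggests a cleaner design: every outcome is selected from $\bigcap_j \CB_j(x^j,\theta^j)$, with a default of $w(\sigma)$ and a modification through $y$-components that lands in the intersection.

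The existence step checks that the truthful unanimous profile at $(x,\theta)$ with $x\in F(\theta)$ is a Berge equilibrium. Fixing $i$'s message $(x,\theta,\cdot)$, any outcome the opponents reach lies in $\CB_i(x,\theta)\subseteq L_i(x,\theta)$, so $i$ cannot be made strictly better off.

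Uniqueness goes by cases on the equilibrium at $\theta'$.
\begin{enumerate}
\item Under unanimity, Berge at $\theta'$ gives $\CB_i(x,\theta)\subseteq L_i(x,\theta')$ for all $i$, and part~(ii) yields $x\in F(\theta')$.
\item Under a cross profile with default outcome $w(\sigma)$, part~(i) applies directly.
\item In the integer-type regimes, the opponents of every $j\neq i$ can reach all of $Y$, so $Y\subseteq L_j(y,\theta')$. Part~(iii) then handles the case where exactly one agent is restricted, and part~(iv) the case where nobody is.
\end{enumerate}

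\textbf{Main obstacle.} The main obstacle is reconciling the two demands on the outcome function. Fixing $i$'s message must restrict reachable outcomes to $\CB_i(x^i,\theta^i)$ in every regime, which is needed for existence. At the same time, non-default regimes must give opponents full reach over $Y$, which is needed to invoke parts (iii) and (iv). These demands conflict whenever an agent whose own announcement is ``protected'' also participates in a regime that gives others full reach. I would try first to make every non-unanimous regime output an element of $\bigcap_j\CB_j(x^j,\theta^j)$, so that protection holds everywhere. Uniqueness would then rest entirely on part~(i), with parts (iii) and (iv) used only to handle announcement-inconsistent messages, for instance with $y$-components used as tie-breakers inside the intersection. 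If that fails, I would relax the message space so that an agent may send an ``unprotected'' flag, and route parts (iii) and (iv) through flagged messages.
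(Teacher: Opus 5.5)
\textbf{There is a genuine gap: the sufficiency half, which is the whole content of the theorem beyond Lemma~\ref{lem:necessity}, is not proved.} You never fix an outcome function. Your uniqueness cases presuppose reachability properties: in protected regimes the opponents of $i$ reach exactly $\CB_i(x^i,\theta^i)$, and in integer regimes they reach all of $Y$. Whether these are compatible with existence is exactly the ``main obstacle'' you leave open, and the designs you sketch do not resolve it.

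\emph{Your initial Maskin-type rules fail.} With $n\geq 3$, the opponents of a truthful agent can send two distinct pairs and win the integer game, so existence fails. Also, in your Rule~2 the dissenter's own $y^i$ moves the outcome. So at a unanimous profile the opponents of $i$ cannot sweep $\CB_i(x,\theta)$, which your uniqueness case~1 requires.

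\emph{Your preferred repair also fails.} You propose confining every non-unanimous outcome to $\bigcap_j \CB_j(x^j,\theta^j)$ and resting uniqueness on part~(i). But part~(i) concerns only the fixed selection $w(\sigma)$. It can be invoked only if, holding $m_i$ fixed, $i$'s opponents can reach all of $\CB_i(x^i,\theta^i)$. Under confinement they reach only the union over $\sigma'$ with $\sigma'_i=\sigma_i$ of $\bigcap_j \CB_j(\sigma'_j)$, and Condition $\beta^\varepsilon$ does not make this cover $\CB_i(x^i,\theta^i)$. Moreover, any outcome other than $w(\sigma)$ chosen by $y$-components is certified by no part of the condition. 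Parts (iii) and (iv) need some agents' opponents to have full reach over $Y$, which confinement rules out.

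\textbf{Your last-resort ``unprotected flag'' idea is the right one and is the paper's route; what is missing is the design.}
\begin{enumerate}
\item Messages are $(x^i,\theta^i,f^i,y^i,n^i)$, and the integer game is played only when \emph{all} agents flag. So an unflagged agent is never exposed to it.
\item When exactly one agent $i$ is unflagged and all flagged agents report the same $(x,\theta,y)$, the outcome is $y$ if $y\in\CB_i(x^i,\theta^i)$ and $x^i$ otherwise. The protected agent's opponents thus control exactly $\CB_i(x^i,\theta^i)$.
\item Every other profile with at least one unflagged agent yields the common $x$ if it is unanimous and unflagged, and $w(\sigma(m))$ otherwise. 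Since $w(\sigma(m))$ lies in every $\CB_j$, existence follows at once.
\end{enumerate}

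\textbf{The uniqueness step your sketch lacks is the mixed case.} Take a profile with some flagged and some unflagged agents and outcome $w(\sigma)$.
\begin{itemize}
\item Each unflagged agent's opponents sweep $\CB_i(x^i,\theta^i)$ by all flagging with a common $y$.
\item Each flagged agent's opponents reach all of $Y$ by having the unflagged agents flag and win the integer game.
\end{itemize}
Since $\CB_j\subseteq Y$, this gives $\CB_j(x^j,\theta^j)\subseteq L_j(w(\sigma),\theta)$ for every $j$, and part~(i) applies. When exactly one agent is unflagged, she can flag and win the integer game, which gives the others full reach over $Y$ and feeds part~(iii). Part~(iv) covers the all-flag case and part~(ii) the unanimous unflagged case.
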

 
\begin{proof}
\noindent\textbf{(Only if.)} This is Lemma~\ref{lem:necessity}.
 
\medskip\noindent\textbf{(If.)} Suppose $F$ satisfies
Condition $\beta^\varepsilon$. We construct a mechanism $\Gamma=(M,g)$ that
Berge-implements $F$.

\medskip\noindent\textbf{The mechanism:}
For every $i\in N$ let
\[
  M_i = \bigl\{(x^i,\theta^i,f^i,y^i,n^i)\in Y\times\Theta\times\{NF,F\}
               \times Y\times\mathbb{Z} \;\big|\; x^i\in F(\theta^i)\bigr\}.
\]
Write $K(m)\equiv\{i\in N\mid f^i=F\}$ for the set of agents who \emph{raise a flag} and
$\sigma(m)\equiv(x^i,\theta^i)_{i\in N}\in\Sigma^F$ for the \emph{reported
cross-profile}. The outcome function $g$ is:
\begin{enumerate}
\item If $K(m)=\emptyset$: \;(a) if all agents report the same pair $(x,\theta)$,
  then $g(m)=x$; \;(b) otherwise, $g(m)=w(\sigma(m))$.
\item If $\emptyset\neq K(m)\neq N$: \;(a) if $N\setminus K(m)=\{i\}$ and all
  $j\in K(m)$ report the same $(x,\theta,y)$, then $g(m)=y$ if
  $y\in\CB_i(x^i,\theta^i)$ and $g(m)=x^i$ otherwise; \;(b) otherwise,
  $g(m)=w(\sigma(m))$.
\item If $K(m)=N$: $j^{*}=\min_{i\in N}\{\operatorname{argmax}_{j}n^j\}$ and $g(m)=y^{j^{*}}$.
\end{enumerate}
Existence of $w(\sigma(m))$ in $\bigcap_i\CB_i(x^i,\theta^i)$ in cases 1(b) and 2(b)
is guaranteed by Condition $\beta^\varepsilon$(i). Every rule places $g(m)$ in $Y$
since all components $x^i,y^i\in Y$ and $\CB_i\subseteq Y$.

\textbf{Existence:} Let us first show that $F(\theta)\subseteq g(BE(\Gamma,\theta))$.
Fix any $x \in F(\theta)$ and let $m_i=(x,\theta,NF,\cdot,\cdot)$ for all $i\in N$. Then,
according to rule (1)(a), $g(m)=x$. Suppose that, for some $i\in N$, the agents in
$N\setminus \{i\}$ deviate to $m'_{-i}$. Since $f^i=NF$, the profile $(m_i,m'_{-i})$ can
never fall into rule (3), so it falls into one of the following rules.

\begin{itemize}
    \item Rule (1)(a). Then every agent reports $(x,\theta)$ and
      $g(m_i,m'_{-i})=x\in L_i(x,\theta)$.
    \item Rule (1)(b). Then $g(m_i,m'_{-i})=w(\sigma(m_i,m'_{-i}))\in
      \bigcap_{j\in N}\CB_{j}(x^j,\theta^j)\subseteq \CB_i(x,\theta)\subseteq L_i(x,\theta)$,
      where $(x^j,\theta^j)_{j\in N}=\sigma(m_i,m'_{-i})$ and $(x^i,\theta^i)=(x,\theta)$.
    \item Rule (2)(a). Then $N\setminus K(m_i,m'_{-i})=\{i\}$, so
      $g(m_i,m'_{-i})\in \CB_{i}(x,\theta)\subseteq L_i(x,\theta)$.
    \item Rule (2)(b). Then $g(m_i,m'_{-i})=w(\sigma(m_i,m'_{-i}))\in
      \bigcap_{j\in N}\CB_{j}(x^j,\theta^j)\subseteq \CB_i(x,\theta)\subseteq L_i(x,\theta)$,
      as in rule (1)(b).
\end{itemize}

In every case the deviation to $m'_{-i}$ is not profitable for $i$. Since the choice of
$m'_{-i}$ was arbitrary, $g(m_i,M_{-i})\subseteq L_i(x,\theta)$, and since the choice of
agent $i$ was arbitrary, $g(m_i,M_{-i})\subseteq L_i(x,\theta)$ for all $i\in N$. We conclude
that $m \in BE(\Gamma,\theta)$, and so $x\in g(BE(\Gamma,\theta))$. Since the choice of
$x\in F(\theta)$ was arbitrary, $F(\theta)\subseteq g(BE(\Gamma,\theta))$.

\textbf{Uniqueness:} Next, let us show that $g(BE(\Gamma,\theta))\subseteq F(\theta)$. To this
end, fix any $m\in BE(\Gamma,\theta)$ and suppose that it falls into one of the following
rules.

\begin{itemize}
    \item Rule (1)(a). Then $m_i=(x,\theta',NF,\cdot,\cdot)$ for all $i\in N$, and
    $g(m)=x$. By construction of $M_i$ we have $x\in F(\theta')$, so there is nothing to
    prove if $\theta'=\theta$; suppose therefore that $\theta'\neq\theta$. Fix any $i\in N$
    and any $y\in\CB_i(x,\theta')$, and suppose the agents in $N\setminus\{i\}$ deviate to
    $m'_{-i}$ with $m'_j=(x,\theta',F,y,\cdot)$ for all $j\in N\setminus\{i\}$. Then
    $K(m_i,m'_{-i})=N\setminus\{i\}$, so $(m_i,m'_{-i})$ falls into rule (2)(a), and since
    $y\in\CB_i(x,\theta')$ we get $g(m_i,m'_{-i})=y$. As $y$ was arbitrary,
    $\CB_i(x,\theta')\subseteq g(m_i,M_{-i})$. Since $m\in BE(\Gamma,\theta)$ and $g(m)=x$,
    we have $g(m_i,M_{-i})\subseteq L_i(x,\theta)$, and therefore
    $\CB_i(x,\theta')\subseteq L_i(x,\theta)$. As $i$ was arbitrary, this holds for every
    $i\in N$, and $\beta^{\varepsilon}$(ii) applied to $(x,\theta')$ yields $x\in F(\theta)$.

    \item Rule (1)(b). Then $m_i=(x^i,\theta^i,NF,\cdot,\cdot)$ for all $i\in N$, where
    $(x^i,\theta^i)\neq (x^j,\theta^j)$ for some $i,j\in N$, and $g(m)=w(\sigma(m))$. Fix any
    $i\in N$ and any $y\in \CB_{i}(x^i,\theta^i)$, and suppose the agents in
    $N\setminus \{i\}$ deviate to $m'_{-i}$ with $m'_j=(x^i,\theta^i,F,y,\cdot)$ for all
    $j\in N\setminus \{i\}$. Then $(m_i,m'_{-i})$ falls into rule (2)(a), and
    $g(m_i,m'_{-i})=y$. Since the choice of $y$ was arbitrary,
    $\CB_{i}(x^i,\theta^i) \subseteq g(m_i,M_{-i})$, and since $m\in BE(\Gamma,\theta)$ with
    $g(m)=w(\sigma(m))$, we have $\CB_{i}(x^i,\theta^i) \subseteq
    L_i(w(\sigma(m)),\theta)$. As the choice of agent $i$ was arbitrary, this holds for all
    $i\in N$, and $\beta^{\varepsilon}$(i) implies that $w(\sigma(m))\in F(\theta)$.

    \item Rule (2)(a). Then there exists $i \in N$ with $m_i=(x^i,\theta^i,NF,\cdot,\cdot)$
    such that $m_j=(\hat{x},\hat{\theta},F,\bar{y},\cdot)$ for all $j\neq i$, and
    $g(m)\in \CB_{i}(x^i,\theta^i)$.

    Fix any $y \in \CB_{i}(x^i,\theta^i)$ and suppose the agents in $N\setminus \{i\}$
    deviate to $m'_{-i}$ with $m'_j=(\hat{x},\hat{\theta},F,y,\cdot)$ for all
    $j\in N\setminus \{i\}$. Then $(m_i,m'_{-i})$ falls into rule (2)(a) and
    $g(m_i,m'_{-i})=y$. Since the choice of $y$ was arbitrary,
    $\CB_{i}(x^i,\theta^i) \subseteq g(m_i,M_{-i})$, and since $m\in BE(\Gamma,\theta)$,
    $\CB_{i}(x^i,\theta^i) \subseteq L_i(g(m),\theta)$.

    Now fix any $j\in N\setminus \{i\}$ and any $\bar{z} \in Y$, and suppose agent $i$
    changes $m_i$ into $m'_i=(x^i,\theta^i,F,\bar{z},n^i)$. Then $K(m'_i,m_{-i})=N$, so
    $(m'_i,m_{-i})$ falls into rule (3); by choosing $n^i$ high enough agent $i$ wins the
    integer game and $g(m'_i,m_{-i})=\bar{z}$. Since the choice of $\bar{z} \in Y$ was
    arbitrary, $Y \subseteq g(m_j,M_{-j})$, and since $m\in BE(\Gamma,\theta)$,
    $Y \subseteq L_j(g(m),\theta)$. As the choice of agent $j\neq i$ was arbitrary,
    $Y \subseteq L_j(g(m),\theta)$ for all $j\in N\setminus\{i\}$. Then
    $\beta^{\varepsilon}$(iii) implies that $g(m)\in F(\theta)$.

    \item Rule (2)(b). Then $m_j=(x^j,\theta^j,F,y^j,n^j)$ for all $j\in K(m)$,
    $m_i=(x^i,\theta^i,NF,y^i,n^i)$ for all $i\in N\setminus K(m)$, and $g(m)=w(\sigma(m))$.

    Fix any $i\in N\setminus K(m)$ and any $y \in \CB_{i}(x^i,\theta^i)$, and for each
    $j\in N\setminus \{i\}$ let $m'_j=(x^i,\theta^i,F,y,\cdot)$. Then
    $K(m_i,m'_{-i})=N\setminus\{i\}$, so $(m_i,m'_{-i})$ falls into rule (2)(a) and
    $g(m_i,m'_{-i})=y$. Since the choice of $y$ was arbitrary,
    $\CB_{i}(x^i,\theta^i) \subseteq g(m_i,M_{-i})$, and since $m\in BE(\Gamma,\theta)$,
    $\CB_{i}(x^i,\theta^i) \subseteq L_i(w(\sigma(m)),\theta)$. As the choice of
    $i \in N\setminus K(m)$ was arbitrary, this holds for all $i \in N\setminus K(m)$.

    Now fix any $j \in K(m)$ and any $\bar{z} \in Y$, and suppose that each agent
    $i\in N\setminus K(m)$ changes $m_i$ into $m'_i=(x^i,\theta^i,F,\bar{z},n^i)$. Then every
    agent raises a flag, so the resulting profile falls into rule (3); any such agent $i$
    wins the integer game by choosing $n^i$ high enough, and the outcome is $\bar{z}$. Since
    the choice of $\bar{z} \in Y$ was arbitrary, $Y \subseteq g(m_j,M_{-j})$, and since
    $m\in BE(\Gamma,\theta)$, $Y \subseteq L_j(w(\sigma(m)),\theta)$. As the choice of
    $j\in K(m)$ was arbitrary, $Y \subseteq L_j(w(\sigma(m)),\theta)$ for all $j\in K(m)$;
    since $\CB_{j}(x^j,\theta^j)\subseteq Y$, it follows that
    $\CB_{j}(x^j,\theta^j) \subseteq L_j(w(\sigma(m)),\theta)$ for all $j\in K(m)$.

    Therefore $\CB_{i}(x^i,\theta^i) \subseteq L_i(w(\sigma(m)),\theta)$ for every $i\in N$,
    and $\beta^{\varepsilon}$(i) implies that $w(\sigma(m))\in F(\theta)$.

  \item Rule (3). Then $K(m)=N$, so $m_i=(x^i,\theta^i,F,y^i,n^i)$ for all $i\in N$. Fix any
  $i \in N$ and any $y \in Y$, and suppose each agent $j\neq i$ changes $m_j$ into
  $m'_j=(\cdot, \cdot, F, y,n^j)$. Then $(m_i,m'_{-i})$ falls into rule (3) because
  $K(m_i,m'_{-i})=N$, and some $j\neq i$ wins the integer game by choosing $n^j$ high
  enough, so that $g(m_i,m'_{-i})=y$. Since the choice of $y \in Y$ was arbitrary,
  $Y \subseteq g(m_i,M_{-i})$, and since $m\in BE(\Gamma,\theta)$, $Y \subseteq
  L_i(g(m),\theta)$. As the choice of agent $i\in N$ was arbitrary, $Y \subseteq
  L_i(g(m),\theta)$ for all $i\in N$, and $\beta^{\varepsilon}$(iv) implies that
  $g(m)\in F(\theta)$.

  This exhausts all possible cases and completes the proof.

\end{itemize}

\end{proof}

\subsection*{Proof of Proposition~\ref{prop:BimpliesN}}
 
\begin{proof}[Proof of Proposition~\ref{prop:BimpliesN}]
\noindent\textbf{(Berge $\Rightarrow$ Nash, for any $n\geq 3$).}
Suppose $F$ is Berge-implementable. By Lemma \ref{lem:necessity}, $F$ satisfies
Condition $\beta^\varepsilon$ with the set $Y$ and the Berge sets
$\{\CB_i(x,\theta)\}$. To establish Nash implementability via Condition $\mu$, we must exhibit
a range $B$ and sets $\{\CN_i(x,\theta)\}$ satisfying ($\mu$i)--($\mu$iii). We take
$B = Y$ and set
$\CN_i(x,\theta) \equiv \CB_i(x,\theta)$ for every $i$, $x$ and $\theta$ such that $x\in F(\theta)$. We now verify that the Berge
sets $\CB_i$ satisfy these properties. First, $x \in MA_i(\CB_i(x,\theta),\theta)$ holds
for every $i$, since $\CB_i(x,\theta) \subseteq L_i(x,\theta)$ places $x$ at the top of
$\CB_i(x,\theta)$ under $\theta$. Fix any $\theta'$.
 
\emph{($\mu$i).} Suppose $x \in \bigcap_{i \in N} MA_i(\CB_i(x,\theta), \theta')$, i.e.\
$x$ is each agent's most-preferred outcome within her own Berge opportunity set at $\theta'$.
Then $\CB_i(x,\theta) \subseteq L_i(x,\theta')$ for every $i$, so part~(ii) of
Condition $\beta^\varepsilon$ delivers $x \in F(\theta')$.
 
\emph{($\mu$ii).} Suppose $c \in MA_i(\CB_i(x,\theta), \theta') \cap \bigl[\bigcap_{j \neq
i} MA_j(B, \theta')\bigr]$ for some $i$. Then $c \in \CB_i(x,\theta)$ with $\CB_i(x,\theta)
\subseteq L_i(c,\theta')$ (as $c$ tops $\CB_i(x,\theta)$ at $\theta'$), and $B = Y \subseteq
L_j(c,\theta')$ for every $j \neq i$ (as $c$ tops the whole range for each such $j$). These
are exactly the hypotheses of part~(iii) of Condition $\beta^\varepsilon$, which yields
$c \in F(\theta')$.
 
\emph{($\mu$iii).} Suppose $d \in \bigcap_{i \in N} MA_i(B, \theta')$. Then $Y = B
\subseteq L_i(d,\theta')$ for every $i$, so part~(iv) of Condition $\beta^\varepsilon$ gives
$d \in F(\theta')$.
 
By the
characterization theorem of \citet{moore1990nash}, $F$ is Nash-implementable. (Note that
only the necessity direction of the characterization is invoked, and within it only parts
(ii)--(iv) of Condition $\beta^\varepsilon$; part~(i), the cross-profile
intersection requirement, is not needed for this direction when $n \geq 3$, consistent with
the fact that it is the extra demand Berge places beyond Nash.)

\medskip\noindent\textbf{(Berge $\Leftrightarrow$ Nash, $|N| = 2$).}
We prove both directions directly by means of the permuted mechanism, without relying on
Condition $\mu$ (which characterizes Nash implementation only for $n \geq 3$).
 
\emph{Construction.} Suppose $\Gamma' = (M', g')$ is any mechanism, with
$M' = M_1' \times M_2'$. Define the \emph{permuted mechanism} $\Gamma = (M, g)$ by
$M_1 = M_2'$, $M_2 = M_1'$, and $g(m_1, m_2) = g'(m_2, m_1)$.
 
The key observation is that for any profile $m = (m_1, m_2) \in M$ and any state $\theta$:
\begin{itemize}
  \item The set of outcomes agent~$2$'s deviations can produce while agent~$1$ holds $m_1$
    fixed in $\Gamma$ equals the set of outcomes agent~$1$'s deviations can produce while
    agent~$2$ holds $m_2$ fixed in $\Gamma'$:
    \[
      \{g(m_1, m_2') : m_2' \in M_2\} = \{g'(m_2, m_1') : m_1' \in M_1'\}.
    \]
  \item Symmetrically, the set of outcomes agent~$1$'s deviations can produce while agent~$2$
    holds $m_2$ fixed in $\Gamma$ equals the set of outcomes agent~$2$'s deviations can
    produce while agent~$1$ holds $m_1$ fixed in $\Gamma'$:
    \[
      \{g(m_1', m_2) : m_1' \in M_1\} = \{g'(m_2', m_1) : m_2' \in M_2'\}.
    \]
\end{itemize}
In other words, \emph{the Berge opportunity set of agent~$i$ in $\Gamma$ equals the Nash
opportunity set of agent~$j \neq i$ in $\Gamma'$}, for the same held message. It follows
immediately that:
\begin{equation}\label{eq:permute}
  m = (m_1, m_2) \in BE(\Gamma, \theta)
  \;\iff\;
  \hat{m} = (m_2, m_1) \in NE(\Gamma', \theta).
\end{equation}
Indeed, $m \in BE(\Gamma, \theta)$ requires that the opponents of each agent $i$ cannot
improve upon $g(m)$ for $i$, which by the observation above is exactly the requirement that
$i$ herself cannot improve upon $g'(\hat{m})=g(m)$ in $\Gamma'$, i.e.\ that $\hat{m}$ is a
Nash equilibrium of $\Gamma'$.
 
\emph{Berge $\Rightarrow$ Nash, $|N|=2$.} Suppose $\Gamma$ Berge-implements $F$. Apply the
permutation to obtain $\Gamma'$ with $M_1' = M_2$, $M_2' = M_1$, and $g'(m_1', m_2') =
g(m_2', m_1')$. By~\eqref{eq:permute}, for every $\theta$:
\begin{itemize}
  \item \emph{Existence.} For any $x \in F(\theta)$ there exists $m \in BE(\Gamma, \theta)$
    with $g(m) = x$. Then $\hat{m} = (m_2, m_1) \in NE(\Gamma', \theta)$ and
    $g'(\hat{m}) = g(m) = x$.
  \item \emph{Uniqueness.} If $\hat{m} \in NE(\Gamma', \theta)$, then $m = (\hat{m}_2,
    \hat{m}_1) \in BE(\Gamma, \theta)$, so $g'(\hat{m}) = g(m) \in F(\theta)$.
\end{itemize}
Hence $\Gamma'$ Nash-implements $F$.
 
\emph{Nash $\Rightarrow$ Berge, $|N|=2$.} Suppose $\Gamma'$ Nash-implements $F$. Apply
the permutation to obtain $\Gamma$ with $M_1 = M_2'$, $M_2 = M_1'$, and $g(m_1, m_2) =
g'(m_2, m_1)$. By~\eqref{eq:permute}, for every $\theta$:
\begin{itemize}
  \item \emph{Existence.} For any $x \in F(\theta)$ there exists $\hat{m} \in NE(\Gamma',
    \theta)$ with $g'(\hat{m}) = x$. Then $m = (\hat{m}_2, \hat{m}_1) \in BE(\Gamma,
    \theta)$ and $g(m) = g'(\hat{m}) = x$.
  \item \emph{Uniqueness.} If $m \in BE(\Gamma, \theta)$, then $\hat{m} = (m_2, m_1) \in
    NE(\Gamma', \theta)$, so $g(m) = g'(\hat{m}) \in F(\theta)$.
\end{itemize}
Hence $\Gamma$ Berge-implements $F$ as well and this completes the proof.
\end{proof}

\end{document}